\newtheorem{theorem}{Theorem}
\newtheorem{lemma}{Lemma}
\begin{document}

\title{Connection Performance Modeling and Analysis of a Radiosonde Network in a Typhoon}

\author{Hanyi Liu,
        Xianbin Cao,~\IEEEmembership{Senior Member,~IEEE},
        Peng Yang,~\IEEEmembership{Member,~IEEE}, Zehui Xiong,~\IEEEmembership{Senior Member,~IEEE},\\
        Tony Q. S. Quek,~\IEEEmembership{Fellow,~IEEE}, and
        Dapeng Oliver Wu,~\IEEEmembership{Fellow,~IEEE}
\thanks{
H. Liu, X. Cao, and P. Yang are with School of Electronic and Information Engineering, Beihang University, Beijing, China. P. Yang is also with the Department of Mathematics and Theories, Peng Cheng Laboratory, Shenzhen, Guangdong, China. (e-mail: \text{\{lhy\_0312,xbcao,peng\_yang\}@buaa.edu.cn})

Z. Xiong and T. Quek are with the Information Systems Technology and Design Pillar, Singapore University of Technology and Design, Singapore 487372. (e-mail: \text{{zehui\_xiong,tonyquek}}@sutd.edu.sg)

D. Wu is with the Department of Computer Science, City University of Hong Kong, Kowloon, Hong Kong, China. (e-mail: \text{dpwu@ieee.org})
}
}

\hyphenation{op-tical net-works semi-conduc-tor}

\maketitle

\begin{abstract}
This paper is concerned with the theoretical modeling and analysis of uplink connection performance of a radiosonde network deployed in a typhoon. 
Similar to existing works, the stochastic geometry theory is leveraged to derive the expression of the uplink connection probability (CP) of a radiosonde. 
Nevertheless, existing works assume that network nodes are spherically or uniformly distributed. Different from the existing works, this paper investigates two particular motion patterns of radiosondes in a typhoon, which significantly challenges the theoretical analysis. 
According to their particular motion patterns, this paper first separately models the distributions of horizontal and vertical distances from a radiosonde to its receiver. Secondly, this paper derives the closed-form expressions of cumulative distribution function (CDF) and probability density function (PDF) of a radiosonde's three-dimensional (3D) propagation distance to its receiver. Thirdly, this paper derives the analytical expression of the uplink CP for any radiosonde in the network. Finally, extensive numerical simulations are conducted to validate the theoretical analysis, and the influence of various network design parameters is comprehensively discussed. Simulation results show that when the signal-to-interference-noise ratio (SINR) threshold is below -35 dB, and the density of radiosondes remains under 0.01/km³, the uplink CP approaches 26\%, 39\%, and 50\% in three patterns.
\end{abstract}

\begin{IEEEkeywords}
Radiosonde network, Connection probability, Stochastic geometry, Closed-form expression
\end{IEEEkeywords}

\section{Introduction}
\IEEEPARstart{T}{yphoon}, commonly known as a tropical cyclonic weather event, typically develops in tropical and subtropical oceanic zones, exhibiting forceful cyclonic storms with a well-defined eye and encircling convective bands. Typhoons frequently bring about powerful winds, substantial precipitation, and surging storms, making devastating impacts on coastal regions and human activities \cite{Sobel2016HumanIO, ref2, ref3}. Understanding the formation and developmental patterns of typhoons is of significant importance for scientific inquiry, meteorological prediction, and ensuring public safety. A precise prediction of typhoons is crucial for mitigating the adverse impacts caused by these weather events. Detecting typhoons serves as an essential prerequisite in this regard. 

{A number of studies utilized Internet of Things (IoT) networks to detect typhoons \cite{iot1, iot2, iot3, iot4, iot5, iot6}. However, these networks are typically deployed on the ground or the ocean rather than within the typhoon itself, making it challenging to obtain real-time meteorological data from the upper layers of the typhoon. Consequently, typhoon detection suffers from reduced accuracy and timeliness. To overcome the limitation, the deployment of a radiosonde network becomes essential.}
Deploying a radiosonde network is recognized as a pivotal method for typhoon detection \cite{ref31, ref32, ref33}.
The radiosonde network consists of a number of radiosondes (i.e., a type of IoT devices) capable of sensing and transmitting information about the inner structure of typhoons. Receivers are installed on a high-altitude platform (e.g., an airship). 
\hspace{0.1em}Primarily, through instantaneous measurement of vertical atmospheric factors such as temperature, humidity, and pressure, radiosondes provide valuable information for scientists to comprehensively analyze the inner structure and evolutionary progression of typhoons\cite{ref4}.
Overall, the data measured and collected from the radiosonde network provides a crucial foundation for improving the accuracy of typhoon detection and forecasts \cite{ref7}.

As a type of IoT network, many significant problems (e.g., network throughput and connection probability (CP) \cite{ref9,ref10}) should be investigated for the radiosonde network.
This paper focuses on the study of CP of transmission links. Researchers in this field generally utilize stochastic geometry methods to theoretically analyze the CP of transmission links. This process involves modeling and analyzing the distribution of nodes, transmission performance, interference, etc., in a two-dimensional (2D) or three-dimensional (3D) environment. Ultimately, a closed-form expression of CP is derived through rigorous mathematical derivation. 

\subsection{Related Work}
The analysis of CP in wireless communications has garnered considerable attention from the research community.\hspace{0.2em}For instance, the authors in \cite{ref71} analyzed the CP of a reconfigurable intelligent surface (RIS) aided cellular network using the stochastic geometry theory. This work derived the approximate distributions of the composite channel gains with RIS-assisted transmission. The results represented the desired signal channel and the interference channel, respectively. \hspace{0.2em}In \cite{ref72}, a stochastic geometry approach was employed to model and analyze the coverage performance of a downlink small cell network (SCN). The impact of the network statistical parameters on the coverage performance of the SCN was discussed. 
\hspace{0.1em}In \cite{ref73}, the authors investigated the coverage performance of a cellular network consisting of multiple base stations (BSs) and users with second-order macro-diversity, in particular, with non-coherent joint transmission (NCJT). 
A stochastic geometry-based analytical framework was proposed to derive the the signal-to-interference-noise ratio (SINR) coverage probability of a typical user. 
In summary, the above works concentrated on the CP analysis of wireless networks in a 2D space without considering the effect of vertical height.

A 3D space offers a more comprehensive set of information compared to a 2D space. It allows for the consideration of an object's position not only in a plane but also along the vertical axis, enabling a more accurate description of its location and distribution. 
Therefore, many researchers turned to investigating the connection performance of 3D wireless networks.
For instance, in \cite{ref74}, the authors leveraged concepts from stochastic geometry to investigate the downlink performance of a vertical heterogeneous network (VHetNet) consisting of aerial base stations (ABSs). This work derived exact and approximate analytical expressions of a user's CP and achievable rate. 
In \cite{ref75}, \hspace{0.2em}a stochastic geometry-based approach was used for the CP modeling and analysis of single- and multi-swarm unmanned aerial vehicle (UAV) networks. Simulation results revealed that there existed an optimal height and density of UAV swarms that maximized the CP. 
The authors in \cite{ref76} utilized stochastic geometry tools to analyze the packet reception ratio (PRR) of direct communications between UAVs, with a discussion on the impact of UAV deployment height. 
In \cite{ref77}, the authors investigated the radio frequency (RF) energy harvesting and connection performance of a two hop aerial base station-aided (ABS-aided) communication system. This work obtained optimal deployment altitudes of ABSs for the best RF energy harvesting and connection performance. 
In \cite{ref78}, the coverage performance of an integrated high-altitude platform (HAP) and low-altitude platform (LAP) network was analyzed based on stochastic geometry theory. The impact of network parameters such as aerial platform altitude and LAP density was discussed.
The authors in \cite{ref79} presented an analytical framework to evaluate the downlink coverage performance using a stochastic geometry tool. In this framework, the locations of UAVs and users follow a typical Poisson cluster process (PCP) distribution. 
Simulation results revealed that the height and the density of UAVs had significant influence on the network coverage probability. 
The research mentioned above delved into the analysis of network performance within a 3D environment and properly considered the influence of the height parameter. 
Nonetheless, the height parameter was regarded as a variable rather than a random variable subject to a specific distribution in the aforementioned studies. \hspace{0.2em}As a result, \hspace{0.2em}theoretical analysis methods proposed to evaluate the connection performance of 2D wireless networks can be applied to some extent.

The recent studies in \cite{ref82,ref83,ref85,ref38, ref11, ref14, ref16, ref17 ,ref21, ref22, ref24, ref27, ref331, ref341, ref35,ref12, ref15, ref39, new5} have investigated the connection performance of 3D wireless networks with a specific assumption on the statistical distribution of node deployment altitude.
For example, in \cite{ref82}, the authors proposed a novel stacked Poisson line hardcore point process (PLHCP) by introducing the randomness and the safety distance in the altitude dimension. Based on PLHCP, \hspace{0.2em}two tractable approximations were proposed to investigate the uplink CP of a network node.
The authors in \cite{ref83} studied the connection performance of a  non-orthogonal multiple access (NOMA) enhanced UAV-to-Everything (U2X) network, where receivers were assumed to be distributed in a 3D sphere. They derived the closed-form expressions for the outage probability and the ergodic rate of the paired NOMA receivers using stochastic geometry tools.
\hspace{0.2em}Besides, \hspace{0.2em}in \cite{ref85}, \hspace{0.2em}stochastic geometry tools were utilized to compare the CP and the average achievable rate of two 3D wireless aerial networks. The considered networks were a truncated octahedron-based aerial network and a binomial-Voronoi aerial network, where a 3D Poisson point process (PPP) was leveraged to model the locations of aerial nodes in both networks. {Nevertheless, the above works did not investigate the impact of 3D environment and node mobility on theoretical analysis. These factors directly affected spatial distributions of network nodes, which in turn affected the performance analysis results.}

{To this aim, some works in recent two years discussed the impact of their influences.}
{For instance, 
in \cite{new2}, the authors designed a mathematical framework to characterize the performance of an autonomous underwater vehicles (AUVs) network and theoretically analyzed the connection performance. In the AUVs network, multiple AUVs dynamically changed their 3D locations according to the random way-point (RWP) mobility model in a given underwater region.
Drawing on a stochastic geometry theory, the authors in \cite{new3} studied the downlink coverage probability in a mmWave network of ground bases stations assisted by a finite number of UAV base stations under 3D blockage effects of buildings.
Besides, in \cite{new4}, the authors proposed interference coordination via power control under 3D blockage effects in urban environments and derived a theoretical expression of coverage probability of UAV networks with Nakagami-m fading assumption based on stochastic geometry.}

\subsection{Motivations and Challenges}
The existing studies 
indicate that the theoretical analysis of connection performance in wireless communication networks in 3D space has become a significant development trend in the research field \cite{ref12, ref15, ref39}. Researchers typically utilize the stochastic geometry theory to conduct the theoretical analysis of network connection performance. 
However, most of the existing works focused on node-dense networks and assumed that network nodes were uniformly or spherically distributed in 3D space. 
As far as we know, the connection performance of a radiosonde network deployed in a typhoon has not been investigated. 

{Nevertheless, a number of researchers have recently discussed the link quality of communication systems under extreme weather conditions (e.g., typhoons) \cite{tfnew5,chu2020underwater,tfnew2,tfnew8}. For example, the authors in \cite{tfnew5} studied the evaporation duct's impact on electromagnetic (EM) wave propagation during a typhoon and the effects of a typhoon eye on EM wave propagation at different signal frequencies and antenna heights. The authors in \cite{chu2020underwater} investigated the effects of typhoons on underwater optical path loss. The results showed that typhoons significantly reduced the maximum detectable depth of underwater electro-optical identifiers by about 15 m.
Besides, the authors in \cite{tfnew2} proposed to enhance the link quality using a multi-connectivity strategy such that communication systems can adapt to extreme weather conditions. 
However, few of them investigated the impact of movement characteristics of a typhoon on network performance.}

{
The movement characteristics of a typhoon are extremely complex, and its internal aerodynamic and fluid dynamics have a strong influence on the spatial distribution of radiosondes within it.} 
{Therefore, it is essential to review the studies on the internal aerodynamic and fluid dynamics of a typhoon.} 
{Currently, some scholars are focusing on the study of the internal structure and fluid mechanics of typhoons by analyzing the development patterns of past typhoons\cite{tf2, tf1, tf3, tf4, tf5, tf6, tf7, tf8, tf9, tf10, tf11, tf12}. For instance, the authors in \cite{tf2} focused on the precipitation intensification and ice-phase microphysical processes in the spiral rainband of typhoons. The study showed that the cold rain process was crucial for the precipitation intensification in the spiral rainband, with the production and growth of graupel being the most effective process for this intensification. In \cite{tf5}, the authors proposed a mathematical model to study the development of a tropical cyclone. The growth and maintenance of a tropical cyclone mainly depended on the air updraft caused by the latent heat of steam condensation. This mechanism is accompanied by the Coriolis force, which causes the circular air motion, and by the friction of water drops (or ice pieces) against air, which decelerates the air updraft. In \cite{tf12}, the authors investigated the role of the upper-level vertical wind shear (VWS) on the rapid intensification of typhoons. The simulation showed that under moderate easterly VWS, the tilting-induced convective asymmetry moved from the downshear to upshear quadrant, wrapping around the storm center, which enhanced upward motion at the upshear flank and generated upper-level divergent flow.}

{From the research results, we may conclude that} the motion patterns of radiosondes in a typhoon are particular, and we cannot directly apply the 
theoretical analysis approaches proposed in existing works \cite{ref82, ref83, ref85, ref38, ref11, ref14, ref16, ref17 ,ref21, ref22, ref24, ref27, ref331, ref341, ref35,ref12, ref15, ref39, new5, new2, new3, new4} {to modeling and analyzing CP of a radiosonde network in a typhoon.} 
Additionally, during the process of deriving formulas such as the cumulative distribution function (CDF) and probability density function (PDF) of direct propagation distance, complex functional, differential, and integral operations are encountered. This poses a significant challenge to the theoretical derivation.

\subsection{Contributions}
This paper aims to theoretically analyze the connection performance of a radiosonde network deployed in a typhoon. The main contributions of this paper are summarized as below:
\begin{itemize}
    \item We investigate two particular motion patterns of radiosondes in a typhoon. Based on the motion patterns of radiosondes, we separately model the distributions of horizontal and vertical distances from a radiosonde in a typhoon to its receiver. With the obtained distributions, we derive the closed-form expressions of CDF and PDF of the radiosonde's 3D propagation distance to its receiver.
    \item Based on the 3D distance distribution, we further derive the analytical expression of the uplink CP for any radiosonde in the network. 
    \item Extensive numerical simulations are conducted to validate the theoretical analysis. Besides, the impact of various network design parameters is comprehensively discussed, and some useful insights for deploying a radiosonde network are revealed. Simulation results indicate that when the SINR threshold drops below -35 dB, and the density of radiosondes stays below 0.01/km³, the uplink CP reaches around 26\% and 50\% in two distinct patterns and 39\% in a hybrid pattern.
\end{itemize}

This paper is organized as follows. Section \uppercase\expandafter{\romannumeral2} presents the system model, horizontal distance models, and vertical distance models. The theoretical analysis of the uplink CP is performed in Section \uppercase\expandafter{\romannumeral3}. 
Numerical results are presented and discussed in Section \uppercase\expandafter{\romannumeral4}. This paper is concluded in Section \uppercase\expandafter{\romannumeral5}. Besides, a list of the key mathematical notations is summarized in Table \uppercase\expandafter{\romannumeral1}.
\\

\begin{table*}[!t]
  \renewcommand{\arraystretch}{1.2}
  \caption{LIST OF KEY NOTATIONS}
  \label{tab:table1}
  \newcommand{\tabincell}[2]{\begin{tabular}{@{}#1@{}}#2\end{tabular}}
  \centering
  \begin{tabular}{|l|l|l|l|}
   \hline
   {\tabincell{l}{Notation}} & {Definition} & {Notation} & {Definition}  \\\hline
  
   \tabincell{l}{$\mathcal{Z}$} & \tabincell{l}{Set of the interference radiosondes} & {$\mathcal{P}_t$} & {Transmit power of a radiosonde}  \\ \hline
   {\tabincell{l}{$R_i$, $r_i$, $i\in \{TR,\mathcal{Z}\}$}} & {Set of horizontal distance from a radiosonde to RR} &  \tabincell{l}{$\mathcal{P}_r$} & {Received power of the RR}  \\ \hline
    {\tabincell{l}{$H_i$, $h_i$, $i\in \{TR,\mathcal{Z}\}$}} & {Set of vertical distance from a radiosonde to RR} &  {$g_t$}  & {Transmitting antenna gain}  \\ \hline
    {\tabincell{l}{$L_i$, $l_i$, $i\in \{TR,\mathcal{Z}\}$}} & {Set of direct propagation distance from a radiosonde to RR} &  {$g_r$ }  & {Receiving antenna gain
}  \\ \hline
    {\tabincell{l}{$\theta_i$, $i\in \{TR,\mathcal{Z}\}$}} & {Set of elevation angle from a radiosonde to RR} &  {$A_{r_T}$}  & {Rain attenuation of TR link}  \\ \hline
   \tabincell{l}{$\mathcal{K}_i, i \in \left\{TR,\mathcal{Z}\right\}$} & {Set of all communication links} & {$\mathcal{M}$}  & {Set of rain attenuation of IR links} \\ \hline
   {$g_i, i\in \{TR,\mathcal{Z}\}$} &  \tabincell{l}{Small-scale Rayleigh fading caused by abundant \\ scatterers}  & {$\gamma_R$} & {Rain attenuation ratio} \\ \hline
   {${d_s}_i,i\in \{TR,\mathcal{Z}\}$} &  {Distance through the rainfall region}  & {$r$} & {Distance correction factor} \\ \hline
   {$\alpha$} &  {Path loss exponent}  & {$\epsilon$} & {Power control factor} \\ \hline
   {$\lambda_n$, $\lambda_r$} & {Distribution density of radiosondes} &  {$\sigma^2$}  & {Thermal noise of the RR} \\ \hline 
   {$k_s$} & {Shape parameter in Weibull distribution} &  {$\lambda_s$}  & {Scale parameter in Weibull distribution} \\ \hline  
   {$p_c$} & {Connection probability of the uplink} & {$R^{max}_{RX}$}  & \tabincell{l}{Maximum horizontal distance within the \\ communication range of RR} \\ \hline 
   {$H^{max}_{RX}$} & {Maximum vertical distance} & {$H^{min}_{RX}$}  & {Minimum vertical distance} \\ \hline  
  \end{tabular}
 \end{table*}

\section{System, Power, and Distribution Models}
\subsection{System Model}
\begin{figure}[!t]
\centering
\includegraphics[width=3in]{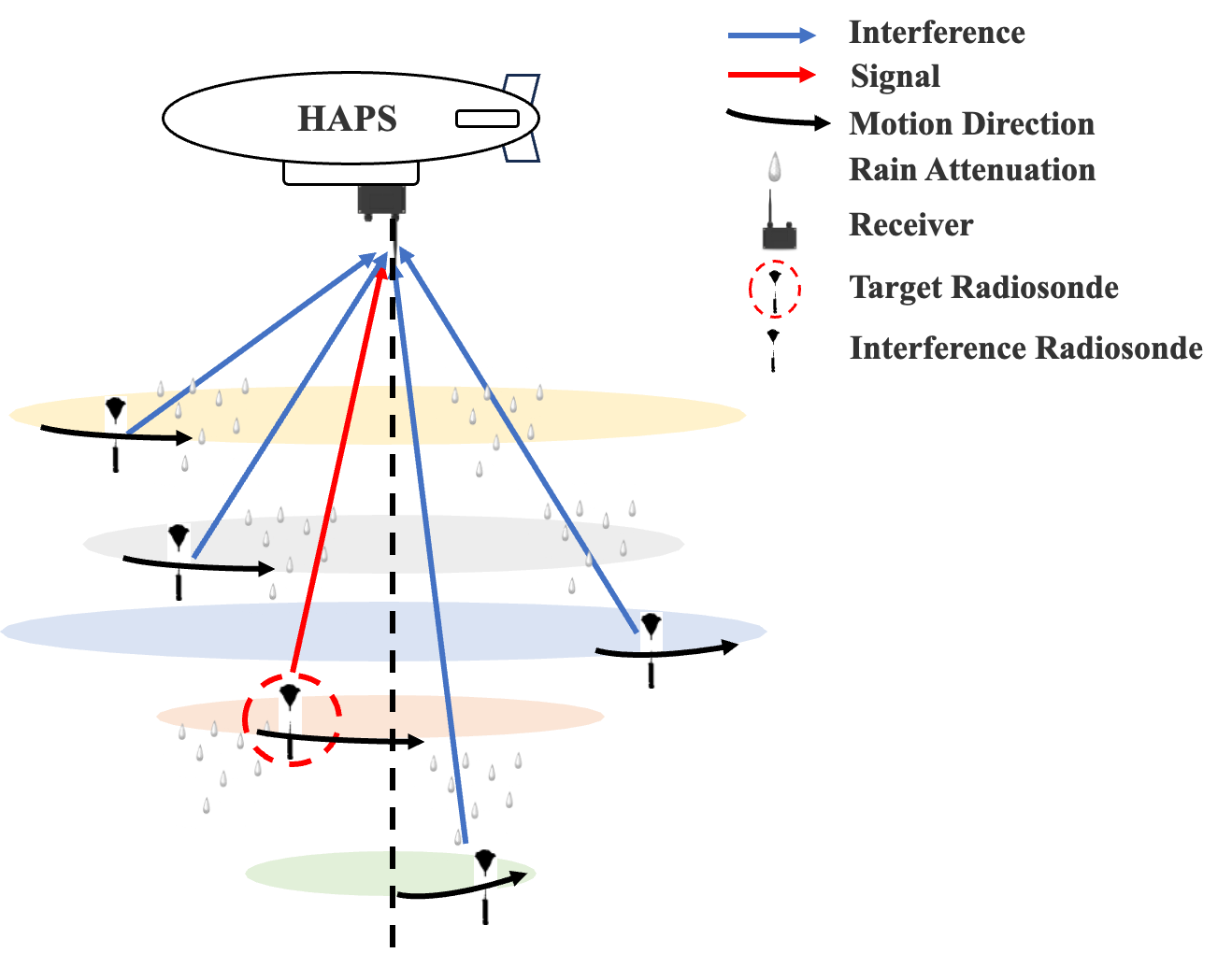}
\caption{A scenario of collecting meteorological information from a number of radiosondes.}
\label{fig:enter-label}
\vspace{-0.3cm}
\end{figure}

In this paper, we theoretically analyze the uplink communication performance of a radiosonde system containing a high altitude platform station (HAPS) with a radiosonde receiver (RR), a randomly selected target radiosonde (TR), and a number of interference radiosondes (IRs), as shown in Fig. \ref{fig:enter-label}. The HAPS is considered to be deployed above the eye of the typhoon. The location of the HAPS $\mathcal{O} \in \mathbb{R}^3$ is regarded as the origin of the coordinates. The TR will send perceived meteorological information to the RR in the presence of signal interference from IRs. The interference set is denoted by $\mathcal{Z}=\left\{IR_1, IR_2,IR_3,...,IR_N\right\}$. All communication links (TR-RR, IRs-RR) denoted by $\mathcal{K}_i, i \in \left\{TR,\mathcal{Z}\right\}$ are unidirectional and share the frequency band. All the radiosondes are equipped with omni-directional antennas and utilize the same transmit power $\mathcal{P}_t$. 
Radiosondes are dropped and distributed in different areas of the typhoon, and their trajectories are illustrated in Fig. \ref{fig:enter-label2}.
This type of movement can be approximated as circular motion. Denote the coordinate of the center point of the trajectory of a radiosonde by $\mathcal{O}^\prime_i$, $i \in \left\{TR,\mathcal{Z} \right\}$. 
The movement radius of a radiosonde is denoted by $R_i$. The distance between $\mathcal{O}^\prime_i$ and $\mathcal{O}$, denoted by $H_i$, is a vertical distance.
The direct propagation distance $L_i$ between a radiosonde and the RR as well as the elevation angle $\theta_i$ can be derived according to $H_i$ and $R_i$.

\begin{figure}[!t]
\centering
\includegraphics[width=2.5in]{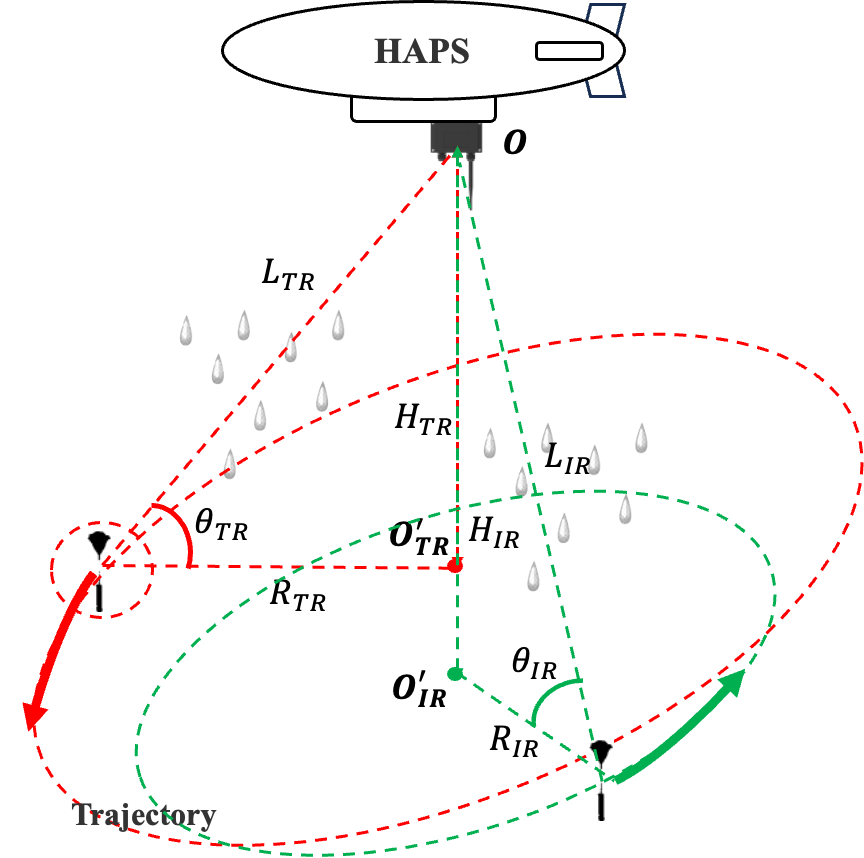}
\caption{Distances from TR and IR to RR, and trajectories of radiosondes.}
\label{fig:enter-label2}
\hspace{-0.3cm}
\end{figure}

\begin{figure}[!t]
\centering
\includegraphics[width=3.5in]{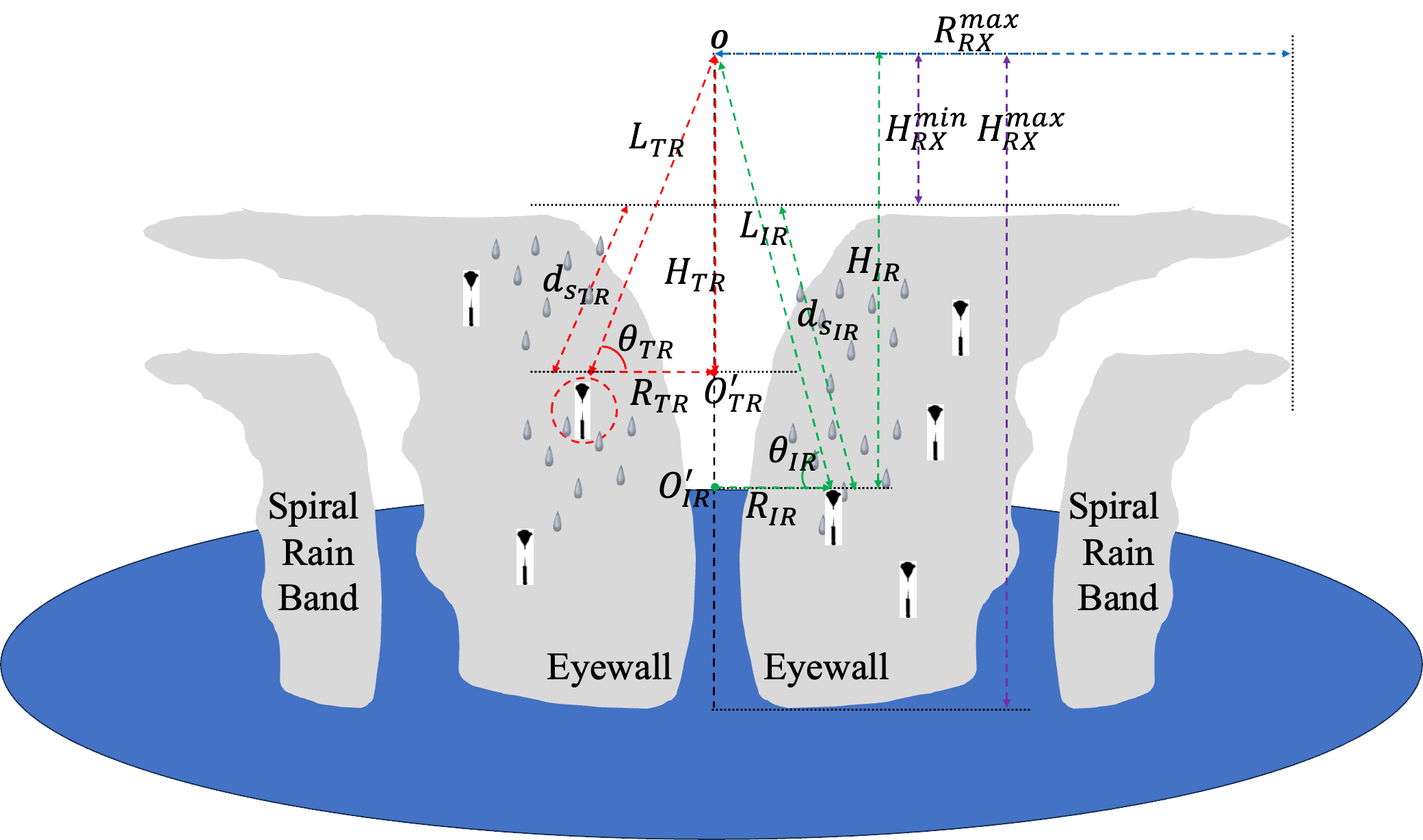}
\caption{Typhoon structure and various types of distances in the sectional view.}
\label{fig:enter-label3}
\vspace{-0.2cm}
\end{figure}

\subsection{Receiving Power Modeling}
In a typhoon, there exhibits heavy rainfall regions (exactly, the eye wall region) near the typhoon center and spiral rain bands, as shown in Fig. \ref{fig:enter-label3}. Due to the presence of heavy rain, the rain attenuation effect on the transmission of perceived meteorological information by radiosondes may be significant. 
Denote $A_{r_T}$ as the rain attenuation over the link $\mathcal{K}_{TR}$, and denote $\mathcal{M}=\left\{A_{r_\mathcal{Z}}(IR_1),A_{r_\mathcal{Z}}(IR_2),A_{r_\mathcal{Z}}(IR_3),...,A_{r_\mathcal{Z}}(IR_N)\right\}$ as the set of rain attenuation over the interference links.
As recommended by the International Telecommunication Union (ITU), for the link $\mathcal{K}_{TR}$, its rain attenuation (in dB) can be computed by \cite{Specific2023ITU}
\begin{align}
    A_{r_T} &= rd_{s_i}\gamma_R \\
    \gamma_R &= aR^b \label{eq2}
\end{align}
where $\gamma_R$ is rain attenuation ratio with the unit of $dB/km$, $r$ is distance correction factor, ${d_s}_i$ (in km) is the distance of transmission through the rain region, $a$ and $b$ are related to the carrier frequency $f$ (in GHz), polarization, and path elevation angle $\theta_i$, which can be computed by \cite{Specific2019ITU}
\begin{numcases}{}
    a = 2.5292 \times 10^{-7} f ^{5.8688 - 1.2697\log{f}}\label{eq3}
    \\
    b = 2.2698 - 1.2145 \log{f} + 0.2293\log{^2f}
    \\
    {d_s}_i = \frac{H_i - H^{min}_{RX}}{\sin{\theta_i}} 
\end{numcases}
and the minimum safe vertical distance between HAPS and typhoon is denoted by $H^{min}_{RX}$.

Besides, the abundant scatterers (such as ice crystals) in the upper layer of the typhoon lead to scattering of signals and form multipath propagation. Thus, we consider small-scale Rayleigh fading \hspace{0.2em}$g_i$\hspace{0.2em} over the link \hspace{0.2em}$\mathcal{K}_i$, which is a random variable obeying an exponential distribution with a mean of $\mu ^{-1}$. Mathematically, $g_i\sim \text{exp}{(\mu)}, i \in \left\{TR,\mathcal{Z}\right\}$. The path loss is inversely proportional to the propagation distance with a path loss exponent $\alpha$. $\epsilon \in \left[0,1\right]$ is the power control factor that controls the power of the compensation of path loss. Then, the effective signal receiving power at RR can be expressed as 
\begin{equation}
    \mathcal{P}_r = \mathcal{P}_t g_{TR} A^{-1}_{r_T} L^{-\alpha}_{TR}H^{\alpha\epsilon}_{TR}g_tg_r
\end{equation}
where \hspace{0.2em}$H^{\alpha\epsilon}_{TR}$\hspace{0.2em} is a power control item, \hspace{0.2em}$g_t$\hspace{0.2em} and \hspace{0.2em}$g_r$\hspace{0.2em} are the transmitting antenna gain and the receiving antenna gain, respectively, which remain unchanged for all the radiosondes.

\subsection{3D Spatial Distribution of Radiosondes}
Owing to the uncertainty of the radiosonde's motion, 
the spatial distribution of radiosonde cannot be modeled using conventional methods such as spherical modeling in a 3D environment. 
Nevertheless, the horizontal distance $R_i$ and vertical distance $H_i$ are independent of each other. Then, we attempt to model the spatial distribution of a radiosonde in horizontal and vertical directions separately.

\subsubsection{Horizontal Model}
In this subsection, we employ a homogeneous Poisson point process (HPPP) to model the network of radiosondes. This model 
guarantees a relatively even dispersion of radiosondes. In a HPPP, the positions of radiosondes are mutually independent.
As mentioned above, radiosondes are generally considered to follow circular motion. Nevertheless,
they might undergo irregular motion in response to wind directions. 
Thus, we will undertake diverse modeling approaches based on pattern of the motion, and the in-depth discussions are outlined below. 
\begin{enumerate}[i)]
    \item Circular Motion Model \\
    \begin{figure}[!t]
        \centering
        \includegraphics[width=3in]{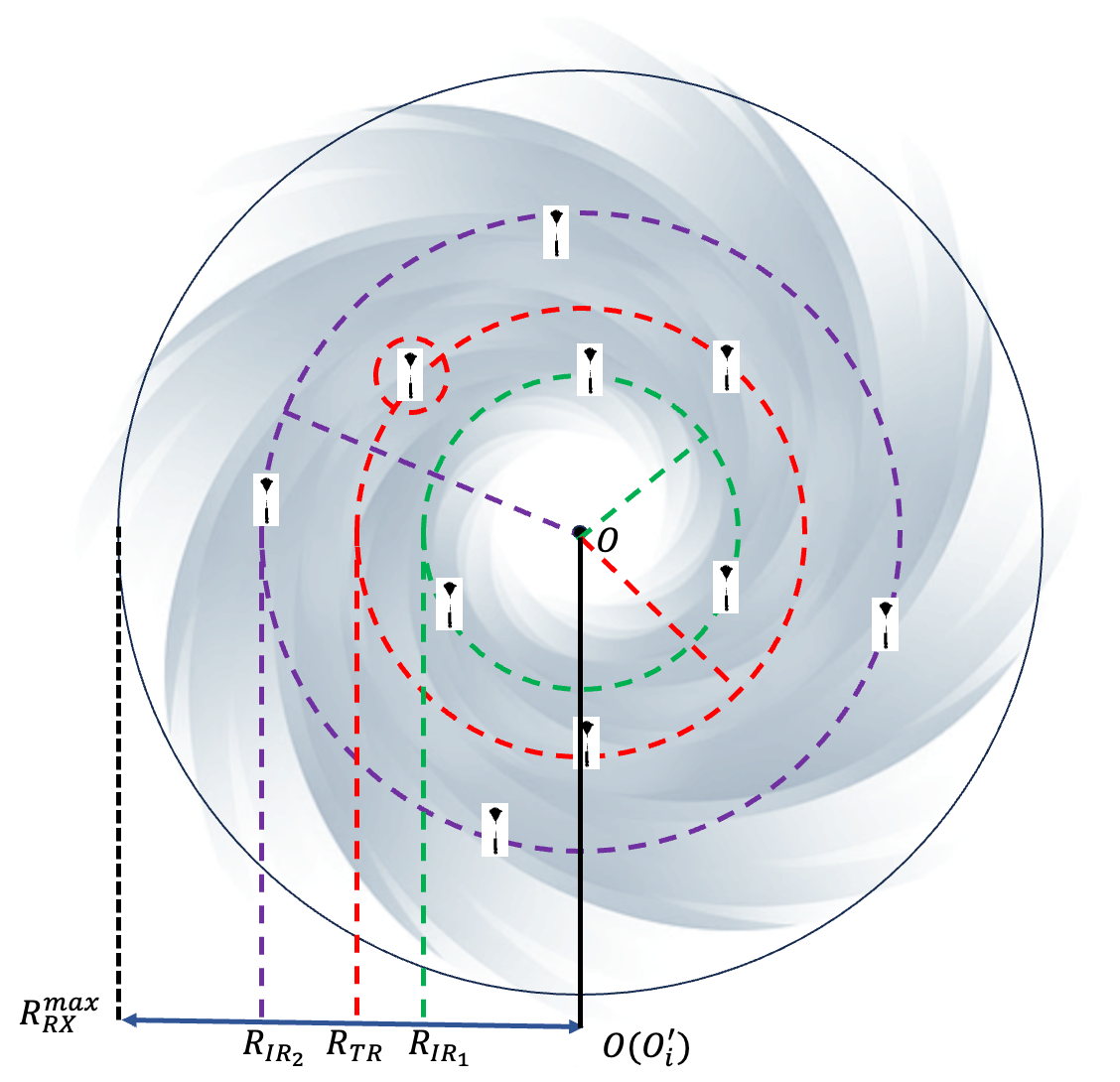}
        \caption{Circular motion of radiosondes.}
        \label{fig:enter-label4}
        \vspace{-0.3cm}
    \end{figure}
\hspace{1em}Contemplating a 2D plane, a radiosonde's trajectory forms a consistent circle with an unchanging radius \cite{tf5}. The radius of the trajectory is equal to the horizontal distance from the radiosonde to the RR. We then have the option to project the 2D-HPPP $\Phi_r$ within the plane as a one-dimensional (1D) HPPP along a line, as depicted in Fig. \ref{fig:enter-label4}. After the projection, some radiosondes may be located at the position with the same horizontal distance toward the RR. \hspace{0.2em}The peak value $R^{max}_{RX}$ on this linear segment signifies the maximum position that the radiosonde can reach in the horizontal directions within the communication range of the RR. The trajectory radii of all the radiosondes adhere to a 1D-HPPP in the horizontal direction. From the definition of 1D-HPPP, \hspace{0.2em}the number of trajectory radii $N_{R_i}$ is a random variable obeying the Poisson distribution with a mean of ${\lambda}_r R^{max}_{RX}$, where  $\lambda_r$ represents the distribution intensity of radiosondes. The radius $R_i$ is uniformly distributed in a range of $[0 , R^{max}_{RX}]$. The CDF and PDF of $R_i$ are given by

\begin{equation}\label{CDF_Ri}
    F_{R_i}(r_i) = \frac{\pi(r_i)^2}{\pi(R^{max}_{RX})^2} = (\frac{r_i}{R^{max}_{RX}})^2
\end{equation}
\begin{equation}\label{PDF_Ri}
    f_{R_i}(r_i) = \frac{dF_{R_i}(r_i)}{dr_i} = \frac{2r_i}{(R^{max}_{RX})^2}
\end{equation}

    \item Irregular Motion Model \\
    \begin{figure}[!t]
        \centering
        \includegraphics[width=2.5in]{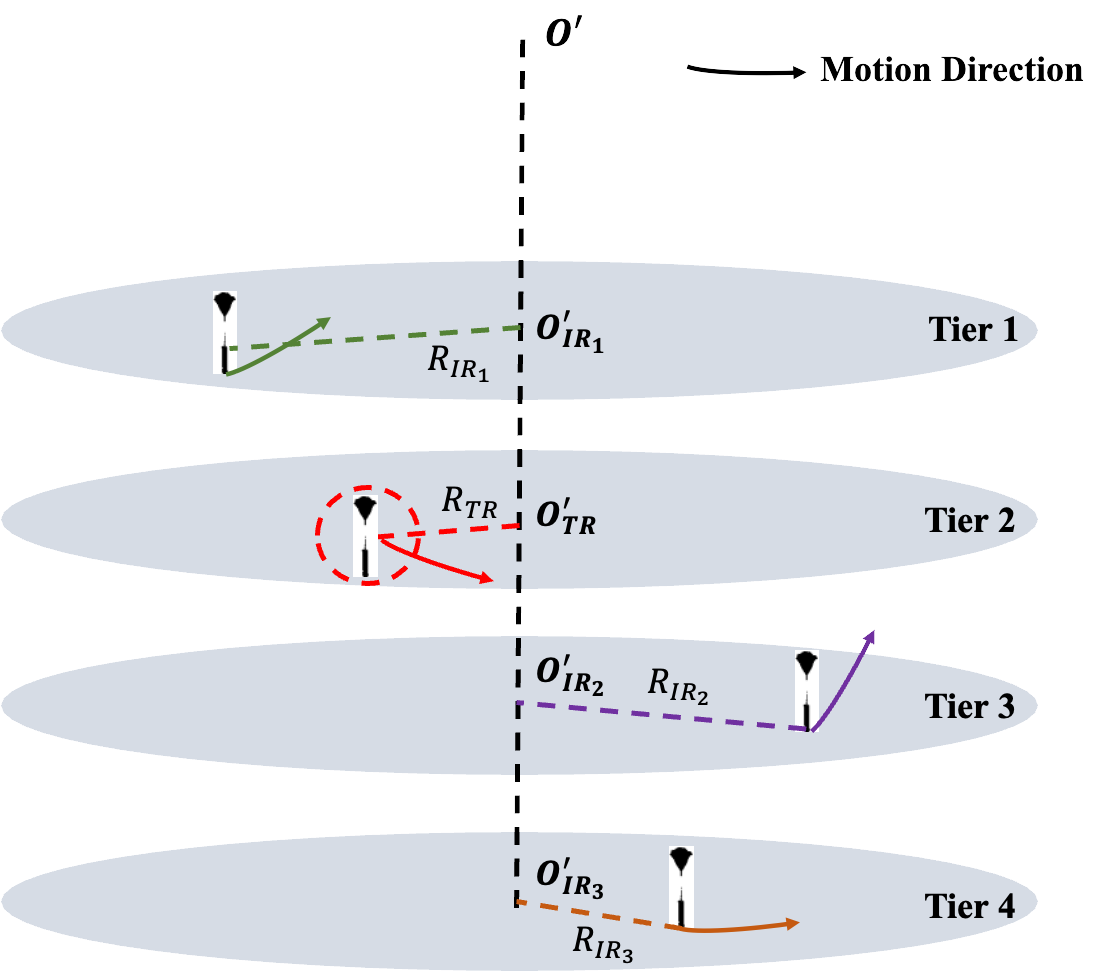}
        \caption{Radiosondes' motion directions and their distributed altitude tiers.}
        \label{fig:enter-label6}
        \vspace{-0.3cm}
    \end{figure}
    \hspace{1em}Radiosondes may exhibit an irregular motion state in a typhoon, \hspace{0.2em}making it challenging to accurately articulate the motion model with specific formulas\cite{tf12}. The precise dynamics remain elusive, given the absence of current literature exploring the motion of radiosondes. 
    Consequently, we commence the modeling investigation from the standpoint of network topology. Represent all the radiosondes in the entire space as a 2D-HPPP $\Phi_n$ with an intensity denoted by $\lambda_n$. \hspace{0.2em}Radiosondes are dispersed across different altitude tiers and move in different directions, as shown in Fig. \ref{fig:enter-label6}. For each radiosonde $i$, we can find its nearest horizontal distance to the corresponding center point $\mathcal{O}^\prime_i$, $i \in \left\{TR,\mathcal{Z} \right\}$ in each tier.
    According to the stochastic nature of HPPP, $R_i$ can be any value within the range $\left(0, +\infty\right)$. Nevertheless, the communication range of a radiosonde is finite, capped at a maximum value of $R^{max}_{RX}$. $R_i$ is thus confined to the interval $\left[0, R^{max}_{RX}\right]$, and its CDF is expressed as
    \begin{equation}\label{case1FRi}
    \begin{split}
        \mathcal{F}_{R_i}\left(r_i\right) &= \mathbb{P}\left[R_i  \leq r_i\right] \\
        &= 1 - \mathbb{P}\left[R_i > r_i\right] \\
        &= \frac{1 - \exp{\left(-\lambda_n \pi r^2_i\right)}}{{1 - e^{-\pi\lambda_n{(R^{max}_{RX})}^2}}}
    \end{split}
    \end{equation}
    where $\mathbb{P}\left[R_i > r_i\right]$ can be construed as the likelihood of no points emerging within a circular region with a radius $R_i$.
    The derivation of Eq. (\ref{case1FRi}) results in the PDF for $R_i$
    \begin{equation}
    \label{eq:irregular_horizontal_model}
        f_{R_i}\left(r_i\right) = \frac{2\pi\lambda_nr_ie^{-\pi\lambda_nr^2_i}}{{1 - e^{-\pi\lambda_n{(R^{max}_{RX})}^2}}} , r_i \in \left[0, R^{max}_{RX}\right]
    \end{equation}
\end{enumerate}

\subsubsection{Vertical Model}
For any radiosonde $i$, its altitude $H_i$ is affected by various dynamic and intricate meteorological factors, which challenges its distribution modeling.
The radiosonde network differs greatly from the existing networks, such as urban mobile networks, satellite networks, and UAV networks, where the altitude of a node is at times perceived as constant or uniformly distributed. 
Further, an insufficient exploration and validation exists in previous research concerning the vertical motion of network nodes in complex environments. 
To this end, 
considering the general direction of wind within a typhoon, we utilize the Weibull distribution to model $H_i$. The general PDF of $H_i$ is
\begin{equation}
    f_{H_i}\left(h_i;\lambda_s, k_s\right) = \frac{\frac{k_s}{\lambda_s}\left(\frac{h_i}{\lambda_s}\right)^{k_s - 1}e^{-\left(\frac{h_i}{\lambda_s}\right)^{k_s}}}{{e^{-\left(\frac{H^{min}_{RX}}{\lambda_s}\right)^{k_s}} - e^{-\left(\frac{H^{max}_{RX}}{\lambda_s}\right)^{k_s}}}}
\end{equation}
where $\lambda_s$ and $k_s$ are the scale and shape parameters of the Weibull distribution, and $h_i \in [H^{min}_{RX}, H^{max}_{RX}]$ with $H^{max}_{RX}$ representing the farthest vertical distance from a radiosonde to the RR that the radiosonde can reach in the vertical direction.

\subsection{{3D Spatial Distribution Modeling}}
In this subsection, we discuss the 3D spatial distribution modeling of radiosondes in two cases. 
\subsubsection{Case 1}
{
During the development phase of a typhoon, intense convective activity occurs. The typhoon draws moisture from the ocean surface, causing warm air to rise rapidly and form strong convection layers. Cold anomalies at the tropopause locally destabilize the atmosphere, which enhances the deep convection and thus promotes the development of outflow channels.} {The following two equations model the directions of fluids in some regions of a typhoon \cite{tf12}.}
{
\begin{equation}{\label{coolingeffect}}
    Q_{d y n}=-\bar{w}\left(\partial_z T+\frac{g}{C_p}\right)
\end{equation}
\begin{equation}{\label{velocity}}
    \frac{dw}{dt}=g\frac{T^{\prime}}{\overline{T}}
\end{equation}
where $Q_{d y n}$ is the dynamical cooling effect, $\bar{w}$ is the vertical velocity, $T$ is temperature, $C_p$ is the specific heat of dry air at constant pressure, $g$ is the gravity constant, $-\partial_z T$ represents the environmental lapse rate, and $\frac{g}{C_p}$ is the dry adiabatic lapse rate. $\bar{T}$ represents the azimuthal mean, and $T^\prime$ indicates the asymmetric component deviated from the azimuthal average, which calculates the vertical velocity tendency due to the local buoyancy production. 
}

{Eq. (\ref{coolingeffect}) shows that when updrafts lift the atmosphere, the adiabatic cooling effect of the airflow causes the temperature to drop, creating a cold anomaly. The dynamic cooling effect is particularly obvious at high altitude, which helps reduce the atmospheric static stability, thereby promoting further convection and intensification of typhoons. The buoyant acceleration is computed by Eq. (\ref{velocity}) and describes the process by which an air mass accelerates upward due to buoyancy. When the temperature of the convective region $T^\prime$ increases relative to the environmental temperature $\bar{T}$, it will lead to greater buoyant acceleration, resulting in stronger upward motion. This upward movement contributes to the enhancement of convective activity and the formation of eyewall structures.} 
\begin{figure}[!t]
\centering
\includegraphics[width=3in]{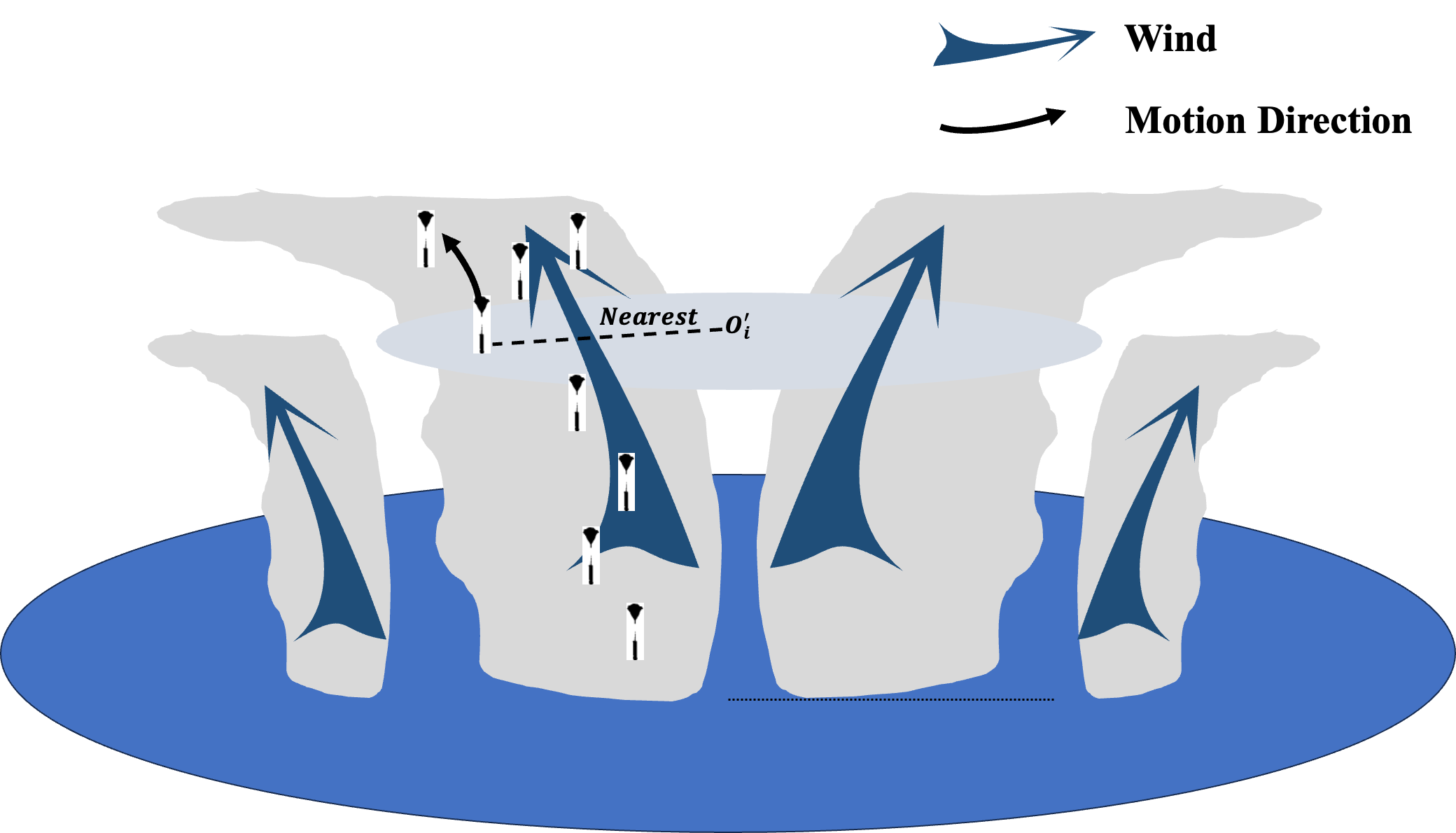}
\caption{Wind and radiosonde motion directions in Case 1.}
\label{fig:3dmodel1}
\vspace{-0.3cm}
\end{figure}
As a result, the height of radiosondes may rise rapidly, {which may occur between the outer rainband region and the eyewall region, as depicted in Fig. \ref{fig:3dmodel1}.}
This situation will lead to the concentrating distribution of most radiosondes in the upper layer of the typhoon, that is, the area closer to the RR. Yet, there are also a few radiosondes distributed farther away from the RR. Consequently, the vertical distance $H_i$ between any radiosonde $i$ and the RR may follow an exponential distribution. 
 For the Weibull distribution, we can transform it into an exponential distribution when setting $k_s = 1$ and $\lambda_s = 2$.
 Meanwhile, there is uncertainty about the horizontal moving direction of a radiosonde in this case. We can then utilize Eq. (\ref{eq:irregular_horizontal_model}) to calculate the PDF of the horizontal distance $R_i$. 
Further, these two types of distances are independent of each other.
As a result, the 3D spatial distribution of a radiosonde can be modeled as 
\begin{equation}{\label{case1js}}
     \begin{array}{l}
    {f_{{R_i},{H_i}}}\left( {{r_i},{h_i}} \right) = \\
    \hspace{-2em}\quad \left\{ {\begin{array}{*{20}{l}}
{\frac{\pi {\lambda _n}{r_i}{e^{ - \pi {\lambda _n}r_i^2 - \frac{{{h_i}}}{2}}}}{{[1 - e^{-\pi\lambda_n{(R^{max}_{RX})}^2}][e^{-\frac{H_{RX}^{min}}{2}}-e^{-\frac{H_{RX}^{max}}{2}}]}},}&\begin{array}{l}
{h_i} \in \left[ {H_{RX}^{min},H_{RX}^{max}} \right]\\
\& {r_i} \in \left[ {0,R_{RX}^{max}} \right],
\end{array}\\
{0,}&{else.}
\end{array}} \right.
\end{array}
\end{equation}

This case is motivated by {Eqs. (\ref{coolingeffect}) and (\ref{velocity}). These two formulas indicate that the upward convective activity in a typhoon is enhanced, making it reasonable to assume that the distribution of radiosonde follows the distribution described by Eq. (\ref{case1js}).}
To achieve the goal of comprehensively collecting meteorological data from the interior of a typhoon, all radiosondes are distributed at different $R_i$ and $H_i$ in the 3D spatial distribution model, guaranteeing the uniqueness of each radiosonde's location. 
\subsubsection{Case 2}
\begin{figure}[!t]
\centering
\includegraphics[width=3in]{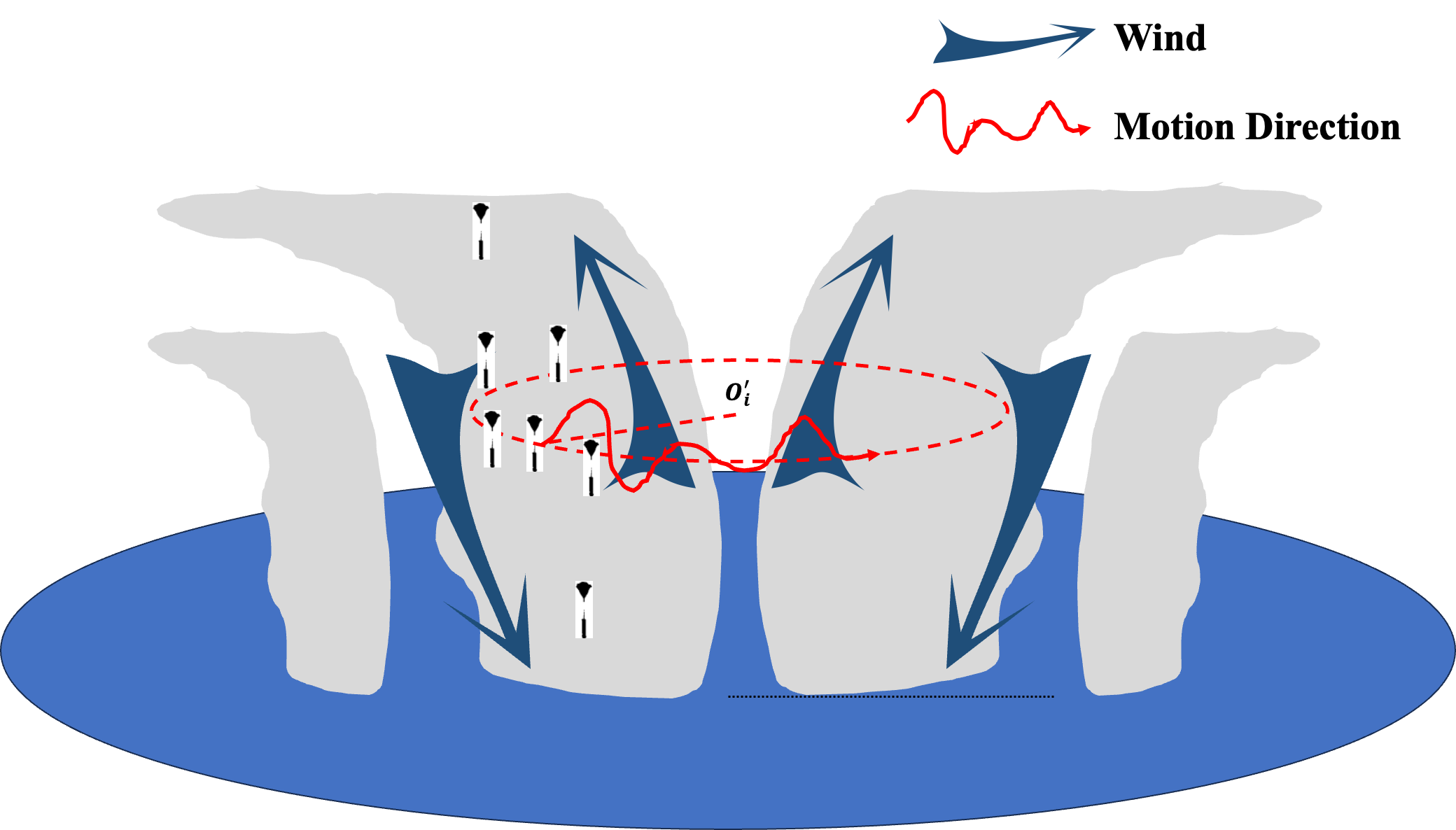}
\caption{Wind and radiosonde motion directions in Case 2.}
\label{fig:3dmodel2}
\vspace{-0.3cm}
\end{figure}
In certain regions within a typhoon, there are simultaneous occurrences of updrafts and downdrafts, causing fluctuations in altitude for radiosondes within certain layers of the typhoon. Due to the Coriolis force, the airflow inside the typhoon exhibits a cyclonic rotation; {at the same time, the friction of water drops or ice pieces against air can decelerate the air updraft \cite{tf5}}, potentially causing the horizontal movement of radiosondes to manifest as circular motion, as illustrated in Fig. \ref{fig:3dmodel2}. {The following three equations characterize the dynamic balance of fluids in certain regions of a typhoon \cite{tf5}.}
{
\begin{equation}{\label{rfriction}}
    \hspace{-1em}\varrho\biggl(\frac{\partial v_{r}}{\partial t}+v_{r} \frac{\partial v_{r}}{\partial r}-\frac{1}{r}v_{0}^{2}+v_{z} \frac{\partial v_{r}}{\partial z}\biggr)=-R_{1} \frac{\partial(\varrho T)}{\partial r}+f_{0}\varrho v_{\vartheta}-\varepsilon_{1}(z)v_{r}
\end{equation}
\begin{equation}{\label{thetafriction}}
    \varrho\biggl(\frac{\partial v_{\vartheta}}{\partial t}+v_{r} \frac{\partial v_{\vartheta}}{\partial r}+\frac{1}{r}v_{\vartheta}v_{r}+v_{z} \frac{\partial v_{\vartheta}}{\partial z}\biggr)=-f_{0}\varrho v_{r}-\varepsilon_{1}(z)v_{\vartheta}
\end{equation}
\begin{equation}{\label{zfriction}}
    \varrho\biggl(\frac{\partial v_{z}}{\partial t}+v_{r} \frac{\partial v_{z}}{\partial r}+v_{z} \frac{\partial v_{z}}{\partial z}\biggr)=-R_{1} \frac{\partial(\varrho T)}{\partial z}-[\Sigma+\varrho]g,
\end{equation}
where, $r$ is the independent variable with respect to the distance from the center of a tropical cyclone, and $z$ is the variable with respect to the altitude. $v_r$, $v_\theta$, and $v_z$ are the radial, tangential, and vertical components of the velocity, respectively. $\varrho$, $T$, $R_1$, and $g$ are respectively the density, temperature, a gas constant, and the acceleration of gravity. $\varepsilon_{1}(z)$ is the coefficient of friction, and $\Sigma$ is the amount of liquid and solid water.}

{$\varepsilon_{1}(z)v_{r}$ and $\varepsilon_{1}(z)v_{\vartheta}$ in Eqs. (\ref{rfriction}) and (\ref{thetafriction}) represent the radial frictional force and tangential frictional force, respectively. The radial frictional force counteracts the acceleration of the radial flow, causing the radial flow of the cyclone to gradually stabilize. Meanwhile, the tangential frictional force weakens the rotation intensity of the air. This deceleration effect, along with the balance of radial velocity, contributes to a more stable cyclone structure. Eq. (\ref{zfriction}) reveals the dynamic balance of vertical air movement within the typhoon, with a focus on the effects of temperature gradients and latent heat release. $-[\Sigma+\varrho]g$ indicates that the upward motion is influenced by the combined gravitational force of the air itself and the liquid and solid water.}
Hence, radiosondes tend to concentrate within a specific altitude range.
We then utilize a normal distribution to model the distribution of the vertical distance. 
The Weibull distribution degenerates to a normal distribution when setting $k_s = 6$ and $\lambda_s = 15$.
Meanwhile, the circular motion model is utilized to model the motion of a radiosonde in the horizontal direction.
{The dynamic balance within the typhoon, as described by Eqs. (\ref{rfriction}), (\ref{thetafriction}), and (\ref{zfriction}), provides an explanation for the distribution and the model, further validating the reasonableness of this case.}
As a result, we write the expression of the 3D spatial distribution of a radiosonde as
\begin{equation}\label{PDF_case2_altitude}
\begin{array}{l}
{f_{{R_i},{H_i}}}\left( {{r_i},{h_i}} \right) = \\
\quad \left\{ {\begin{array}{*{20}{l}}
{\frac{\frac{{4{r_i}}}{{5{{(R_{RX}^{max})}^2}}}{{(\frac{{{h_i}}}{{15}})}^5}{e^{ - {{(\frac{{{h_i}}}{{15}})}^6}}}}{{e^{-(\frac{H_{RX}^{min}}{15})^6}-e^{-(\frac{H_{RX}^{max}}{15})^6}},}}&\begin{array}{l}
{h_i} \in \left[ {H_{RX}^{min},H_{RX}^{max}} \right]\\
\& {r_i} \in \left[ {0,R_{RX}^{max}} \right],
\end{array}\\
{0,}&{else.}
\end{array}} \right.
\end{array}
\end{equation}

Unlike case 1, this case may exhibit {the stable phase of a typhoon's development}. Taking into account the typhoon's internal structure and the impact of wind, the probability increases for radiosondes distributed at particular $H_i$ and $R_i$, {such as the region between the eyewall  and eye of a typhoon}, while it decreases comparatively for other peripheral locations. 
Meanwhile, the theoretical analysis of the communication performance in this case is not a straightforward task. Nevertheless, performance simulations and calculations can be conducted using the Monte Carlo simulation. 

{\subsubsection{Case 3}
Typhoons have a wide range and a complex and dynamic structure. It is possible that a hybrid scenario of Case 1 and Case 2 exists within the typhoon. As is shown in Table \ref{tab:table3}, the discussion of this hybrid case can be approached by integrating both Case 1 and Case 2.
Since the ranges of the variables for the two cases are the same, the spatial distribution of a radiosonde can be given by
\begin{equation}\label{PDF_case3_altitude}
\begin{array}{l}
{f_{{R_i},{H_i}}}\left( {{r_i},{h_i}} \right) = \\
\hspace{-2em}\quad \left\{ {\begin{array}{*{20}{l}}
{p_1{f^{Case 1}_{{R_i},{H_i}}}\left( {{r_i},{h_i}} \right) + p_2{f^{Case 2}_{{R_i},{H_i}}}\left( {{r_i},{h_i}} \right)}&\begin{array}{l}
{h_i} \in \left[ {H_{RX}^{min},H_{RX}^{max}} \right]\\
\& {r_i} \in \left[ {0,R_{RX}^{max}} \right],
\end{array}\\
{0,}&{else.}
\end{array}} \right.
\end{array}
\end{equation}
\hspace{1em}The 3D spatial distribution of a radiosonde in Case 1 and Case 2, respectively, converges to 1 over the value interval. Since the value intervals are the same, that is, the integration intervals are the same, and the sum of $p_1$ and $p_2$ is 1, it is easy to prove that the distribution of Case 3 converges to 1 over the value interval.
}
\begin{table}[!t]
\caption{{Hybrid scenario in Case 3}}
\label{tab:table3}
\centering
\begin{tabular}{|c|c|c|}
\hline
$Case$&$Case 1$&$Case 2$\\
\hline
$Probability$&$p_1$&$p_2$\\
\hline
\multicolumn{3}{|c|}{$\mathbb{P}(Case3)=p_1\mathbb{P}(Case1)+p_2\mathbb{P}(Case2)$} \\
\hline
\end{tabular}
\vspace{-0.3cm}
\end{table}

\section{Connection probability analysis}
We investigate the uplink CP of the radiosonde network. The CP denoted as $p_c$ is defined as the probability that the SINR is greater than a threshold $T_s$, which is expressed as
\begin{equation}
    p_c = \mathbb{P}[SINR > T_s]
\end{equation}
where SINR can be calculated by
\begin{equation*}
    \begin{split}
    {SINR} &= \frac{\mathcal{P}_r}{\sigma^2 + \mathcal{I}_\mathcal{Z}}\\
    &= \frac{\mathcal{P}_t g_{TR} A^{-1}_{r_T} L^{-\alpha}_{TR}H^{\alpha\epsilon}_{TR}g_tg_r}{\sigma^2 + \mathcal{I}_\mathcal{Z}}\\
    &= \frac{\mathcal{P}_t g_{TR} r^{-1}{\gamma_R}^{-1} \frac{sin\theta_{TR}}{H_{TR}-H^{min}_{RX}}  L^{-\alpha}_{TR}H^{\alpha\epsilon}_{TR}g_tg_r}{\sigma^2 + \mathcal{I}_\mathcal{Z}}\\
    \end{split}
\end{equation*}
\begin{equation}
    \begin{split}
        &= \frac{\mathcal{P}_t g_{TR} r^{-1}{\gamma_R}^{-1} L^{-\alpha-1}_{TR}\frac{H^{\alpha\epsilon+1}_{TR}}{H_{TR}-H^{min}_{RX}}g_tg_r}{\sigma^2 + \mathcal{I}_\mathcal{Z}}
    \end{split}
\end{equation}
and the signal interference $\mathcal{I}_\mathcal{Z}$ can be computed by 
\begin{equation}
\begin{split}
    \mathcal{I}_\mathcal{Z} &= \sum_{z \in \mathcal{Z}}{P}_t g_z A^{-1}_{r_z}L^{-\alpha}_{z}H^{\alpha\epsilon}_{z}g_tg_r \\
    &= \sum_{z \in \mathcal{Z}}{P}_t g_z r^{-1}{\gamma_R}^{-1} L^{-\alpha-1}_{z}\frac{H^{\alpha\epsilon+1}_{z}}{H_{z}-H^{min}_{RX}}g_tg_r
\end{split}
\end{equation}

CP can also be visualized as being the average area or the average fraction of radiosondes in coverage. In the previous section, we discussed two distribution cases, each of which is associated with a distinct model. We then theoretically derive the expression of CP.

The SINR expression is so complicated that it is highly challenging to derive the closed-form expression of the uplink CP. 
Fortunately, the stochastic geometric theory can be leveraged to accomplish this hard task, which has been confirmed in many works \cite{ref9,ref10,ref71,ref72,ref73,ref74,ref75,ref76,ref77,ref78,ref79,ref82,ref83,ref85}. 
However, differing from the existing literature \cite{ref71,ref72,ref73}, which utilizes the stochastic geometric theory to analyze the connection performance of a 2D network, this paper discusses the connection performance of a 3D network. Although there exists a number of literature discussing the connection performance of 3D networks, the altitude distribution of network nodes is significantly different from that in this paper, and is much simpler than the altitude distribution designed in this paper. The difference in the altitude distribution makes their derivation processes and approaches quite different. Thus, the existing derivation approaches cannot be directly applied in this paper.

\subsection{CP Expression Derivation in Case 1}
In this case, the random variable $R_i$ and the random variable $H_i$ are independent. From Figs. \ref{fig:enter-label2} and \ref{fig:enter-label3}, we can intuitively observe that $L_i = \sqrt{R^2_i + H^2_i}$, and $L_i \in ({H^{min}_{RX}},{\sqrt{{H^{max}_{RX}}^2 + {R^{max}_{RX}}^2}})$, the CDF and PDF of $L_i$ can then be obtained from the distributions of $R_i$ and $H_i$. Mathematically, we can write the probability of $L_i$ as
\begin{equation}\label{eq17}
    \begin{split}
        \mathcal{F}_{L_i}\left(l_i\right) &= \mathbb{P}\left[L_i  \leq l_i\right] \\
        &= \mathbb{P}\left[\sqrt{R^2_i + H^2_i} \leq l_i\right] \\
        &= \iint_{(u,v) \subset (R_i,H_i)} {f_{R_i,H_i}\left(u,v\right)}dudv
        \\
        &= \int^{R^{max}_{RX}}_0 f_{R_i}(u) \int^{h_i}_{H^{min}_{RX}} f_{H_i}(v)dvdu
        \\
        &= \frac{\int^{R^{max}_{RX}}_0 2\pi\lambda_nue^{-\pi{\lambda_n}u^2}\int^{\sqrt{l^2_i-u^2}}_{H^{min}_{RX}} \frac{1}{2}e^{-\frac{v}{2}}dvdu}{[1 - e^{-\pi\lambda_n{(R^{max}_{RX})}^2}][e^{-\frac{H_{RX}^{min}}{2}}-e^{-\frac{H_{RX}^{max}}{2}}]}
    \end{split}
    \end{equation}

    {However, it is difficult to obtain the closed-form expression of CDF of $L_i$. In (\ref{eq17}), there are both square root and exponential square terms. Besides, there is an improper integration term that cannot be solved by the basic integration formula. To this end, we utilize an equal substitution scheme to transform complex terms and the Taylor formula to deal with the improper integration item. The following lemma presents the closed-form expression of $\mathcal{F}_{L_i}$.}
\begin{lemma}
    {The closed-form expression of the CDF of $L_i$ is expressed as} \\
    \begin{equation}
    \begin{split}
    \mathcal{F}_{L_i}\left(l_i\right) &= C_1[e^{-\frac{{H^{min}_{RX}}}{2}}[1 - e^{-\pi\lambda_n{(R^{max}_{RX})}^2}] - e^{-\frac{l_i}{2}} \\
    & + e^{-\pi\lambda_n(R^{max}_{RX})^2-\frac{\sqrt{l^2_i -(R^{max}_{RX})^2}}{2}}- \\
    &\hspace{-4em}\sum_{n=0}^{\infty}\frac{(\pi\lambda_n)^n}{n!}\frac{({l_i-\frac{1}{4\pi\lambda_n}})^{2n+1}-({\sqrt{l^2_i-(R^{max}_{RX})^2}-\frac{1}{4\pi\lambda_n}})^{2n+1}}{2n+1}]
    \end{split} 
    \end{equation}
{where $C_1 = \frac{1}{[1 - e^{-\pi\lambda_n{(R^{max}_{RX})}^2}][e^{-\frac{H_{RX}^{min}}{2}}-e^{-\frac{H_{RX}^{max}}{2}}]}$.}
\end{lemma}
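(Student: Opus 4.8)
The plan is to reduce the double integral in~(\ref{eq17}) to a one‑dimensional integral in which the part admitting an elementary antiderivative is split off from a genuinely non‑elementary Gaussian‑type piece, and then to treat the latter by completing the square and expanding in a power series.

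First I would carry out the inner integral over $v$, which is immediate: $\int_{H^{min}_{RX}}^{\sqrt{l_i^2-u^2}}\tfrac12 e^{-v/2}\,dv = e^{-H^{min}_{RX}/2}-e^{-\sqrt{l_i^2-u^2}/2}$ (restricting attention to the regime of $l_i$ for which these integration limits are the active ones; for small $l_i$ the upper limits are truncated and the CDF is piecewise, which the statement suppresses). Substituting back, the numerator of~(\ref{eq17}) splits as $I_1-I_2$, with $I_1 = e^{-H^{min}_{RX}/2}\int_0^{R^{max}_{RX}}2\pi\lambda_n u\,e^{-\pi\lambda_n u^2}\,du = e^{-H^{min}_{RX}/2}\bigl(1-e^{-\pi\lambda_n (R^{max}_{RX})^2}\bigr)$, obtained elementarily via $s=u^2$, and $I_2 = \int_0^{R^{max}_{RX}}2\pi\lambda_n u\,e^{-\pi\lambda_n u^2}e^{-\sqrt{l_i^2-u^2}/2}\,du$, which is the hard term and the source of the ``improper'' integral mentioned just before the lemma.

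For $I_2$ I would use the ``equal substitution'' $w=\sqrt{l_i^2-u^2}$, so that $u\,du=-w\,dw$ and $u^2=l_i^2-w^2$; this turns $I_2$ into $2\pi\lambda_n e^{-\pi\lambda_n l_i^2}\int_{\sqrt{l_i^2-(R^{max}_{RX})^2}}^{\,l_i} w\,e^{\pi\lambda_n w^2}e^{-w/2}\,dw$. Integrating by parts — differentiating $e^{-w/2}$ and integrating $2\pi\lambda_n w\,e^{\pi\lambda_n w^2}$ to $e^{\pi\lambda_n w^2}$ — produces exactly the two boundary exponentials $e^{-l_i/2}$ and $e^{-\pi\lambda_n(R^{max}_{RX})^2-\sqrt{l_i^2-(R^{max}_{RX})^2}/2}$ that appear in the lemma, together with a residual integral proportional to $\int e^{\pi\lambda_n w^2-w/2}\,dw$ over the same limits. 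Completing the square, $\pi\lambda_n w^2-\tfrac{w}{2}=\pi\lambda_n\bigl(w-\tfrac{1}{4\pi\lambda_n}\bigr)^2-\tfrac{1}{16\pi\lambda_n}$, which is precisely where the shift $\tfrac{1}{4\pi\lambda_n}$ in the final formula originates. The residual integrand is then a pure Gaussian $\propto\exp\!\bigl(\pi\lambda_n(w-\tfrac{1}{4\pi\lambda_n})^2\bigr)$; applying the Taylor series of $e^{x}$ and integrating term by term over $[\sqrt{l_i^2-(R^{max}_{RX})^2},\,l_i]$ gives $\sum_{n\ge0}\frac{(\pi\lambda_n)^n}{n!}\cdot\frac{(l_i-\frac{1}{4\pi\lambda_n})^{2n+1}-(\sqrt{l_i^2-(R^{max}_{RX})^2}-\frac{1}{4\pi\lambda_n})^{2n+1}}{2n+1}$. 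Collecting $I_1-I_2$ and dividing by the denominator of~(\ref{eq17}) (which is $C_1^{-1}$) then yields the stated closed form.

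The step I expect to be the main obstacle is the residual integral $\int e^{\pi\lambda_n w^2-w/2}\,dw$: its quadratic exponent has a \emph{positive} leading coefficient, so it is not an elementary function (nor expressible through the real error function), which is exactly what forces the power‑series route rather than a clean antiderivative. The two technical care‑points inside this step are (i) justifying the interchange of the infinite sum and the integral — legitimate since the $e^x$ series converges uniformly on the compact $w$‑interval, so its tail is uniformly bounded there — and (ii) bookkeeping of the scalar prefactors $e^{-\pi\lambda_n l_i^2}$ and $e^{-1/(16\pi\lambda_n)}$ generated respectively by the substitution and by the completion of the square, so that the series term is matched against the lemma's expression with the correct coefficient.
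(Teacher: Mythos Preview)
Your proposal is correct and follows essentially the same route as the paper's Appendix~A: inner integral in $v$, split into the elementary piece $F_1$ and the hard piece $F_2$, the substitution $w=\sqrt{l_i^2-u^2}$, extraction of the two boundary exponentials, completion of the square, and Taylor expansion of the residual Gaussian. The only cosmetic difference is that you obtain the boundary terms via integration by parts (integrate $2\pi\lambda_n w\,e^{\pi\lambda_n w^2}$, differentiate $e^{-w/2}$), whereas the paper first completes the square and then algebraically splits $2\pi\lambda_n t = 2\pi\lambda_n\bigl(t-\tfrac{1}{4\pi\lambda_n}\bigr)+\tfrac12$ so that the first summand is the exact derivative of the completed-square exponential; the two devices are equivalent and produce the identical residual $\tfrac12\int e^{\pi\lambda_n w^2-w/2}\,dw$ that is then Taylor-expanded.
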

\begin{proof}
    Please refer to Appendix A.
\end{proof}

{Typically, the PDF of $L_i$ is obtained by differential derivation of the final result of CDF. However, the expression of $\mathcal{F}_{L_i}$ is so complex that the conventional differential approach cannot be directly applied. 
Alternatively, the theorem of differentiation under the integral sign \hspace{0.2em}$\frac{d}{dy}\int^{+\infty}_af(x,y)dx = \int^{+\infty}_a\frac{\partial}{{\partial}y}f(x,y)dx$ is introduced, where differentiation of the integral term is performed in advance to eliminate terms independent of the integration variable. Meanwhile, the equal substitution scheme and the Taylor formula are explored to tackle the exponential term and improper integration. As a result, we can obtain the closed-form expression of the PDF of $L_i$.} The following lemma shows its closed-form expression. 
\begin{lemma}
    The closed-form expression of PDF of $L_i$ takes the following form \\
    \begin{equation}
        \begin{split}\label{case1l_pdf}
        f_{L_i}(l_i) &= C_1\pi\lambda_nl_ie^{-\pi\lambda_nl^2_i-\frac{1}{16\pi\lambda_n}} \times\\
        &\hspace{-3em}\sum_{n=0}^{\infty}\frac{(\pi\lambda_n)^n}{n!}\frac{({l_i-\frac{1}{4\pi\lambda_n}})^{2n+1}-({\sqrt{l^2_i-(R^{max}_{RX})^2}-\frac{1}{4\pi\lambda_n}})^{2n+1}}{2n+1}
        \end{split}
    \end{equation}     
\end{lemma}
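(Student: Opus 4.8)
The plan is to obtain $f_{L_i}(l_i)$ directly from the CDF expression established in Lemma 1, but rather than blindly differentiating the fully expanded closed form, I would differentiate the penultimate integral representation of $\mathcal{F}_{L_i}$ and push the derivative under the integral sign. Concretely, starting from Eq.~(\ref{eq17}), write
\begin{equation*}
\mathcal{F}_{L_i}(l_i) = C_1 \int_0^{R^{max}_{RX}} 2\pi\lambda_n u\, e^{-\pi\lambda_n u^2}\Bigl(e^{-\frac{H^{min}_{RX}}{2}} - e^{-\frac{\sqrt{l_i^2 - u^2}}{2}}\Bigr)du,
\end{equation*}
where the inner $v$-integral has already been evaluated. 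Only the second term depends on $l_i$, so the theorem of differentiation under the integral sign gives
\begin{equation*}
f_{L_i}(l_i) = C_1 \int_0^{R^{max}_{RX}} 2\pi\lambda_n u\, e^{-\pi\lambda_n u^2}\cdot \frac{l_i}{2\sqrt{l_i^2 - u^2}}\, e^{-\frac{\sqrt{l_i^2 - u^2}}{2}}\,du.
\end{equation*}
This eliminates the $e^{-H^{min}_{RX}/2}$ term and the boundary terms that appear in the CDF, which is exactly why the PDF is structurally simpler than a naive $\frac{d}{dl_i}$ of the expanded $\mathcal{F}_{L_i}$ would suggest.

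Next I would reuse the same substitution and series machinery invoked in the proof of Lemma 1 (Appendix A). Introduce $t = \sqrt{l_i^2 - u^2}$ so that $u\,du = -t\,dt$ and the $\frac{1}{\sqrt{l_i^2-u^2}}$ singularity cancels against the Jacobian; the limits $u \in [0, R^{max}_{RX}]$ become $t \in [\sqrt{l_i^2 - (R^{max}_{RX})^2},\, l_i]$. This turns the integral into $C_1 \pi\lambda_n l_i \int e^{-\pi\lambda_n(l_i^2 - t^2)} e^{-t/2}\,dt$, i.e.\ $C_1 \pi\lambda_n l_i e^{-\pi\lambda_n l_i^2}\int e^{\pi\lambda_n t^2 - t/2}\,dt$. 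Completing the square in the exponent, $\pi\lambda_n t^2 - \tfrac{t}{2} = \pi\lambda_n\bigl(t - \tfrac{1}{4\pi\lambda_n}\bigr)^2 - \tfrac{1}{16\pi\lambda_n}$, which accounts for the $e^{-1/(16\pi\lambda_n)}$ prefactor in~(\ref{case1l_pdf}). The remaining $\int e^{\pi\lambda_n(t - 1/(4\pi\lambda_n))^2}\,dt$ has no elementary antiderivative, so — consistent with the paper's stated strategy — I would expand $e^{\pi\lambda_n s^2} = \sum_{n=0}^{\infty}\frac{(\pi\lambda_n)^n}{n!}s^{2n}$ with $s = t - \tfrac{1}{4\pi\lambda_n}$, integrate term by term to get $\sum_{n=0}^{\infty}\frac{(\pi\lambda_n)^n}{n!}\cdot\frac{s^{2n+1}}{2n+1}$, and evaluate between the two limits $t = \sqrt{l_i^2 - (R^{max}_{RX})^2}$ and $t = l_i$. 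Collecting constants then yields exactly the claimed expression~(\ref{case1l_pdf}).

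The main obstacle I anticipate is justifying the interchange of differentiation and integration (and, just as importantly, of summation and integration) in the presence of the integrable singularity $1/\sqrt{l_i^2 - u^2}$ at the upper endpoint when $R^{max}_{RX} = l_i$, and more generally on the range where $l_i$ can approach $R^{max}_{RX}$. The Leibniz rule as quoted in the excerpt is stated for a fixed interval $[a,+\infty)$ with no $l_i$-dependence in the limits, whereas here the natural variable of integration after substitution has an $l_i$-dependent limit; one must either argue via dominated convergence that the endpoint contribution vanishes (the integrand times the derivative of the limit stays controlled) or carry the differentiation in the $u$-variable where the limits are constant, as I did above, and separately check that the singularity is mild enough for the differentiated integral to converge. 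A secondary, more routine concern is uniform convergence of the power series for $e^{\pi\lambda_n s^2}$ on the compact $s$-interval $[\sqrt{l_i^2-(R^{max}_{RX})^2} - \tfrac{1}{4\pi\lambda_n},\, l_i - \tfrac{1}{4\pi\lambda_n}]$, which is immediate since the series for $e^x$ converges uniformly on bounded sets, so term-by-term integration is legitimate. I would keep the exposition short on that last point and concentrate the write-up on the substitution bookkeeping and the endpoint-singularity justification.
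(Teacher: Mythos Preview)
Your proposal is correct and mirrors the paper's own proof almost step for step: differentiate the penultimate integral form of $\mathcal{F}_{L_i}$ under the integral sign, apply the substitution $t=\sqrt{l_i^2-u^2}$, complete the square in the exponent, and expand $e^{\pi\lambda_n s^2}$ as a power series to integrate term by term. The paper carries this out without the analytic justifications you flag (endpoint singularity, uniform convergence), so your write-up would in fact be somewhat more careful than the original.
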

\begin{proof}
    Please refer to Appendix B.
\end{proof}

Given the PDF expression of $L_i$, we can rewrite the expression of the uplink CP as
\begin{equation*}
    \begin{split}
        p_c &= \mathbb{P}[\frac{\mathcal{P}_r}{\sigma^2 + \mathcal{I}_\mathcal{Z}} > T_s | L_{TR}] \\
        &= \int^{L^{max}_{TR}}_{L^{min}_{TR}}\mathbb{P}[
        \frac{\mathcal{P}_r}{\sigma^2 + \mathcal{I}_\mathcal{Z}} > T_s]f_{L_{TR}}(x)dx\\
        &= \int^{L^{max}_{TR}}_{L^{min}_{TR}}\mathbb{P}[
        \frac{\mathcal{P}_t g_{TR} r^{-1}{\gamma_R}^{-1} L^{-\alpha-1}_{TR}\frac{H^{\alpha\epsilon+1}_{TR}}{H_{TR}-H^{min}_{RX}}g_tg_r}{\sigma^2 + \mathcal{I}_\mathcal{Z}} > T_s] \\
        &\hspace{18em} \cdot f_{L_{TR}}(x)dx\\
    \end{split} 
\end{equation*}
\begin{equation}
\label{eq:cp_original}
\begin{split}
    &= \int^{L^{max}_{TR}}_{L^{min}_{TR}}\mathbb{P}[g_{TR} > \frac{T_s (\sigma^2 + \mathcal{I}_\mathcal{Z})}{\mathcal{P}_t r^{-1}{\gamma_R}^{-1} L^{-\alpha-1}_{TR}\frac{H^{\alpha\epsilon+1}_{TR}}{H_{TR}-H^{min}_{RX}}g_tg_r}] \\
    &\hspace{18em} \cdot f_{L_{TR}}(x)dx\\
    &= \int^{L^{max}_{TR}}_{L^{min}_{TR}}\mathbb{P}[g_{TR} > \frac{T_s ({\sigma^\prime}^2 + {\mathcal{I}_\mathcal{Z}}^\prime)}{L^{-\alpha-1}_{TR}\frac{H^{\alpha\epsilon+1}_{TR}}{H_{TR}-H^{min}_{RX}}}]f_{L_{TR}}(x)dx\\
\end{split}
\end{equation}
where $L^{max}_{TR} = {\sqrt{{H^{max}_{RX}}^2 + {R^{max}_{RX}}^2}},\hspace{1em} L^{min}_{TR} = {H^{min}_{RX}}$, 
${\sigma^\prime}^2 = \frac{\sigma^2}{\mathcal{P}_tg_tg_rr^{-1}{\gamma_R}^{-1}}$, $\mathcal{I}^\prime_\mathcal{Z} = \frac{\mathcal{I}_\mathcal{Z}}{\mathcal{P}_tg_tg_rr^{-1}{\gamma_R}^{-1}}$. 

{Under the condition that $g_{TR} \sim exp(\mu)$, we can obtain the following result}
\begin{equation}\label{exponentialcal}
    \begin{split}
        \mathbb{P}[g_{TR} > X] &= 1 - \mathbb{P}[g_{TR} \leq X]\\
        &= 1 - \int^X_0 {\mu}e^{-{\mu}g}dg\\
        &= e^{-{\mu}X}
    \end{split}
\end{equation}

With the definition of Laplace transform of interference  $\mathscr{L}_{\mathcal{I}^\prime_\mathcal{Z}}(s) = \mathbb{E}_{\mathcal{I}^\prime_\mathcal{Z}}[e^{-s{\mathcal{I}^\prime_\mathcal{Z}}}]$, (\ref{eq:cp_original}), and (\ref{exponentialcal}), the following theorem gives the expression of the uplink CP in Case 1.
\begin{theorem}\label{uplinkCP}
    The expression of the uplink CP can be written as
    \begin{equation}
    \label{eq:uplink_cp_case1_expression}
    \begin{split}
        p_c(T_s, \lambda_n, \alpha, \epsilon) = \int^{L^{max}_{TR}}_{L^{min}_{TR}} e^{-{\mu}T_s L^{\alpha+1}_{TR}\frac{H_{TR}-H^{min}_{RX}}{H^{\alpha\epsilon+1}_{TR}}{\sigma^\prime}^2} \cdot \\
        \mathscr{L}_{\mathcal{I}^\prime_\mathcal{Z}}[e^{-{\mu}T_s L^{\alpha+1}_{TR}\frac{H_{TR}-H^{min}_{RX}}{H^{\alpha\epsilon+1}_{TR}}}] \cdot f_{L_{TR}}(x)dx
    \end{split}
    \end{equation}
\end{theorem}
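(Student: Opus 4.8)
The plan is to derive~(\ref{eq:uplink_cp_case1_expression}) directly from the conditional expression~(\ref{eq:cp_original}) by removing, one after the other, the two random quantities still present in its integrand: the desired-link Rayleigh fading $g_{TR}$ and the normalized aggregate interference $\mathcal{I}^\prime_\mathcal{Z}$. The whole argument rests on the independence structure supplied by the HPPP model (the IR positions are independent of the TR position) and by the i.i.d.\ exponential-fading assumption ($g_i\sim\exp(\mu)$ across links). First I would make the conditioning explicit: the outer integral $\int_{L^{min}_{TR}}^{L^{max}_{TR}}(\cdot)\,f_{L_{TR}}(x)\,dx$, with $f_{L_{TR}}$ the density from Lemma~2, is precisely $\mathbb{E}_{L_{TR}}[\,\cdot\,]$; and for a fixed realization of the TR geometry the interference $\mathcal{I}^\prime_\mathcal{Z}$ is still random while $g_{TR}$ is independent of both $L_{TR}$ and $\mathcal{I}^\prime_\mathcal{Z}$. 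So inside the integrand I would apply the law of total expectation once more, conditioning additionally on $\mathcal{I}^\prime_\mathcal{Z}$.

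With that extra conditioning the inner probability has the form $\mathbb{P}[g_{TR}>X]$ with $X = T_s(\,{\sigma^\prime}^2+\mathcal{I}^\prime_\mathcal{Z}\,)\,L^{\alpha+1}_{TR}\,\frac{H_{TR}-H^{min}_{RX}}{H^{\alpha\epsilon+1}_{TR}}$, so the exponential-tail identity~(\ref{exponentialcal}) turns it into $e^{-\mu X}$. The key algebraic move is to factor this exponential multiplicatively as $e^{-\mu T_s {\sigma^\prime}^2 L^{\alpha+1}_{TR}\frac{H_{TR}-H^{min}_{RX}}{H^{\alpha\epsilon+1}_{TR}}}\cdot e^{-s\,\mathcal{I}^\prime_\mathcal{Z}}$, with $s=\mu T_s L^{\alpha+1}_{TR}\frac{H_{TR}-H^{min}_{RX}}{H^{\alpha\epsilon+1}_{TR}}$: the first (noise) factor is deterministic once the TR geometry is fixed and therefore pulls out of $\mathbb{E}_{\mathcal{I}^\prime_\mathcal{Z}}[\,\cdot\,]$, while $\mathbb{E}_{\mathcal{I}^\prime_\mathcal{Z}}[e^{-s\mathcal{I}^\prime_\mathcal{Z}}]$ is, by the definition $\mathscr{L}_{\mathcal{I}^\prime_\mathcal{Z}}(s)=\mathbb{E}_{\mathcal{I}^\prime_\mathcal{Z}}[e^{-s\mathcal{I}^\prime_\mathcal{Z}}]$ recalled just above the theorem, equal to $\mathscr{L}_{\mathcal{I}^\prime_\mathcal{Z}}(s)$ — which is exactly the bracketed factor in~(\ref{eq:uplink_cp_case1_expression}) evaluated at the above $s$. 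Re-inserting the product under the outer $L_{TR}$-integral reproduces~(\ref{eq:uplink_cp_case1_expression}) verbatim, with $H_{TR}$ read as the vertical coordinate associated with the realization $L_{TR}=x$.

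The part that deserves the most care is the conditioning bookkeeping rather than any hard analysis: one must verify that $g_{TR}$ is independent of the pair $(\mathcal{I}^\prime_\mathcal{Z},L_{TR})$ so that~(\ref{exponentialcal}) may be invoked with the interference still ``frozen,'' and that $\mathcal{I}^\prime_\mathcal{Z}$ is independent of $L_{TR}$ so that its Laplace transform enters only through the deterministic argument $s$ and the nested order $\mathbb{E}_{L_{TR}}\mathbb{E}_{\mathcal{I}^\prime_\mathcal{Z}}\mathbb{E}_{g_{TR}}[\cdot]$ is legitimate; both follow from the HPPP assumption and the i.i.d.\ exponential fading. Once these are in place the remaining manipulation is elementary, and — unlike the CDF/PDF derivations of Lemmas~1 and~2 — no series expansion or variable substitution is needed here. (The explicit evaluation of $\mathscr{L}_{\mathcal{I}^\prime_\mathcal{Z}}$ itself, via the probability generating functional of the IR point process together with the fading and rain-attenuation statistics, would be carried out as a separate subsequent step.)
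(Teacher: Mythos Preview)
Your proposal is correct and follows essentially the same route the paper takes: the paper assembles (\ref{eq:uplink_cp_case1_expression}) in the main text from (\ref{eq:cp_original}), the exponential-tail identity (\ref{exponentialcal}), and the definition $\mathscr{L}_{\mathcal{I}'_\mathcal{Z}}(s)=\mathbb{E}[e^{-s\mathcal{I}'_\mathcal{Z}}]$, exactly as you outline, while the formally labeled ``proof'' (Appendix~C) is devoted entirely to the step you defer---evaluating $\mathscr{L}_{\mathcal{I}'_\mathcal{Z}}$ via the PGFL of the HPPP together with the exponential fading. Your conditioning bookkeeping is in fact more explicit than the paper's; the only thing to add for full alignment with what the paper calls the proof is that PGFL computation.
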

where the Laplace transform of the interference is given by
\begin{equation}
\begin{split}
    \mathscr{L}_{\mathcal{I}^\prime_\mathcal{Z}}(s) &= \exp(-2\pi\lambda_n\int^{L^{max}_z}_{l_i}(1 -  \\
&\int^{H^{max}_{RX}}_{H^{min}_{RX}}\frac{\mu}{\mu + sx^{-\alpha-1}\frac{y^{\alpha\epsilon + 1}}{y - H^{min}_{RX}}}\cdot f_{H_z}(y)dy)xdx)
\end{split}
\end{equation}
\begin{proof}
    Please refer to Appendix C.
\end{proof}

Besides, consider that the value of the noise $\sigma^2$ is much smaller than 
the transmit power $\mathcal{P}_t$ of a radiosonde (e.g., $\mathcal{P}_t = 2$ W), we have ${\sigma^\prime}^2 = \frac{\sigma^2}{\mathcal{P}_tg_tg_rr^{-1}{\gamma_R}^{-1}} \approx 0$ and then can obtain the upper bound of CP. 
    The upper bound of the uplink CP can be written as 
    \begin{equation}\label{cp_result}
    \begin{split}
        p_c(T_s, \lambda_n, \alpha, \epsilon) &= \\
        &\hspace{-1em}\int^{L^{max}_{TR}}_{L^{min}_{TR}}\mathscr{L}_{\mathcal{I}^\prime_\mathcal{Z}}[e^{-{\mu}T_s L^{\alpha+1}_{TR}\frac{H_{TR}-H^{min}_{RX}}{H^{\alpha\epsilon+1}_{TR}}}] \cdot f_{L_{TR}}(x)dx
    \end{split}
    \end{equation}

\subsection{CP Expression Derivation in Case 2}
In this case, all radiosondes maintain circular motion of fixed radius in the typhoon environment. With (\ref{PDF_Ri}) and (\ref{PDF_case2_altitude}), we can derive the CDF and PDF of $L_i$ in a similar way to Case 1, and the results are shown as below.
\begin{equation}\label{eq25}
    \begin{split}
        \mathcal{F}_{L_i}\left(l_i\right) &= \frac{\int^{R^{max}_{RX}}_0 \frac{2u}{(R^{max}_{RX})^2}\int^{\sqrt{l^2_i-u^2}}_{H^{min}_{RX}} \frac{2}{5}(\frac{v}{15})^5e^{-(\frac{v}{15})^6} dvdu}{e^{-(\frac{H_{RX}^{min}}{15})^6}-e^{-(\frac{H_{RX}^{max}}{15})^6}}\\
        &= C_2\int^{R^{max}_{RX}}_0 \frac{2u}{(R^{max}_{RX})^2}(e^{-(\frac{H^{min}_{RX}}{15})^6} - e^{-\frac{(l^2_i-u^2)^3}{15^6}})du \\
        &= C_2\int^{R^{max}_{RX}}_0 \frac{1}{(R^{max}_{RX})^2}(e^{-(\frac{H^{min}_{RX}}{15})^6} - e^{-\frac{(l^2_i-u^2)^3}{15^6}})d(u^2) \\
        &\xlongequal{u^2=t} \frac{C_2}{(R^{max}_{RX})^2}\int^{(R^{max}_{RX})^2}_0 [e^{-(\frac{H^{min}_{RX}}{15})^6} - e^{-\frac{(l^2_i-t)^3}{15^6}}]dt
    \end{split}
\end{equation}
{where $C_2 = \frac{1}{e^{-(\frac{H_{RX}^{min}}{15})^6}-e^{-(\frac{H_{RX}^{max}}{15})^6}}$.} In the integration process of Case 2, there is also an improper integration term, which can be tackled by the Taylor formula. The following lemma presents the closed-form expression.

\begin{lemma}
    The closed-form expression of the CDF of $L_i$ is expressed as
    \begin{equation}
    \begin{split}
        &\mathcal{F}_{L_i}\left(l_i\right) = \frac{C_2}{(R^{max}_{RX})^2}[(R^{max}_{RX})^2e^{-(\frac{H^{min}_{RX}}{15})^6}-\\
        &\sum^{\infty}_{n=0} \frac{(-1)^{n}}{n!15^{6n}\cdot(3n+1)}[(l^2_i)^{3n+1} - (l^2_i - (R^{max}_{RX})^2)^{3n+1}]]
    \end{split}
    \end{equation}
\end{lemma}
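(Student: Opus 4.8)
The plan is to pick up where~(\ref{eq25}) leaves off, namely from
\[
\mathcal{F}_{L_i}(l_i)=\frac{C_2}{(R^{max}_{RX})^2}\int_0^{(R^{max}_{RX})^2}\!\!\Big[e^{-(H^{min}_{RX}/15)^6}-e^{-(l_i^2-t)^3/15^6}\Big]dt ,
\]
and to evaluate the bracketed difference one integral at a time. The first integral, $\int_0^{(R^{max}_{RX})^2}e^{-(H^{min}_{RX}/15)^6}\,dt$, has a constant integrand and equals $(R^{max}_{RX})^2 e^{-(H^{min}_{RX}/15)^6}$; this reproduces the first term inside the brackets of the claimed expression and needs nothing further.

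The real work is the second integral $J:=\int_0^{(R^{max}_{RX})^2}e^{-(l_i^2-t)^3/15^6}\,dt$, which is the improper term flagged in the text and has no elementary antiderivative. Following the suggested route I would insert the Maclaurin series $e^{-x}=\sum_{n\ge0}\frac{(-1)^n}{n!}x^n$ with $x=(l_i^2-t)^3/15^6$, so that $e^{-(l_i^2-t)^3/15^6}=\sum_{n\ge0}\frac{(-1)^n}{n!\,15^{6n}}(l_i^2-t)^{3n}$, and then integrate term by term. The substitution $s=l_i^2-t$ (so $ds=-dt$, with $s$ running from $l_i^2$ down to $l_i^2-(R^{max}_{RX})^2$) reduces the $n$-th term to $\int s^{3n}\,ds=\frac{s^{3n+1}}{3n+1}$, giving $\int_0^{(R^{max}_{RX})^2}(l_i^2-t)^{3n}dt=\frac{(l_i^2)^{3n+1}-(l_i^2-(R^{max}_{RX})^2)^{3n+1}}{3n+1}$. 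Hence $J=\sum_{n\ge0}\frac{(-1)^n}{n!\,15^{6n}(3n+1)}\big[(l_i^2)^{3n+1}-(l_i^2-(R^{max}_{RX})^2)^{3n+1}\big]$, and plugging this together with the first integral back into $\mathcal{F}_{L_i}(l_i)=\frac{C_2}{(R^{max}_{RX})^2}\big[(R^{max}_{RX})^2 e^{-(H^{min}_{RX}/15)^6}-J\big]$ gives exactly the stated closed form. This mirrors the technique already used for Lemma~1 in Case~1.

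I expect the genuine difficulty to lie not in the algebra but in the two rigor points. First, one must justify interchanging $\sum$ and $\int$: since $t$ varies over the compact interval $[0,(R^{max}_{RX})^2]$ the exponent $(l_i^2-t)^3/15^6$ is bounded there, so the exponential series converges uniformly in $t$ and term-by-term integration is legitimate (the resulting alternating series also converges very fast thanks to the $n!\,15^{6n}$ in the denominator). Second, writing~(\ref{eq25}) as a clean integral over $u\in[0,R^{max}_{RX}]$ with the inner $v$-integral running from $H^{min}_{RX}$ to $\sqrt{l_i^2-u^2}$ implicitly assumes $H^{min}_{RX}\le\sqrt{l_i^2-u^2}\le H^{max}_{RX}$ for every such $u$; if for some $l_i$ the quantity $\sqrt{l_i^2-u^2}$ exceeds $H^{max}_{RX}$ or drops below $H^{min}_{RX}$, the upper limit must be clipped to $H^{max}_{RX}$ and the $u$-domain splits into sub-intervals, altering the limits. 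Within the support over which~(\ref{eq25}) is stated the computation above applies verbatim; outside it the same series technique is simply applied piecewise.
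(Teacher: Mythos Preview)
Your proposal is correct and follows essentially the same route as the paper's own proof in Appendix~D: split off the constant term, expand $e^{-(l_i^2-t)^3/15^6}$ by its Maclaurin series, and integrate term by term to obtain the stated sum. Your added remarks on uniform convergence and on the implicit clipping of the $v$-limits go beyond what the paper justifies, but the core argument is the same.
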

\begin{proof}
    Please refer to Appendix D.
\end{proof}

Similarly, it is not advisable to obtain the PDF of $L_i$ by directly differentiating the CDF. The derivative theorem under the integral sign is utilized to obtain the integral formula for PDF. The following lemma presents the closed-form expression.
\begin{lemma}
    The closed-form expression of PDF of $L_i$ is given by
    \begin{equation}
    \label{eq:case2_pdf}
    f_{L_i}(l_i) = \frac{2C_2l_i}{(R^{max}_{RX})^2} (e^{-\frac{(l^2_i-(R^{max}_{RX})^2)^3}{15^6}} - e^{-(\frac{l_i}{15})^6})
    \end{equation}
\end{lemma}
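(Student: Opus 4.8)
The plan is to obtain $f_{L_i}$ by differentiating the single-integral representation of the CDF rather than the infinite series of the preceding lemma (which would require term-by-term differentiation and re-summation). Concretely, I would start from the last line of (\ref{eq25}),
\begin{equation*}
\mathcal{F}_{L_i}(l_i) = \frac{C_2}{(R^{max}_{RX})^2}\int_0^{(R^{max}_{RX})^2}\left[e^{-(H^{min}_{RX}/15)^6} - e^{-\frac{(l_i^2 - t)^3}{15^6}}\right]dt,
\end{equation*}
and invoke the theorem of differentiation under the integral sign. The limits $0$ and $(R^{max}_{RX})^2$ do not depend on $l_i$, so no boundary terms appear; the interchange of $d/dl_i$ and $\int \cdot\, dt$ is legitimate because the integrand and its partial derivative in $l_i$ are continuous, hence bounded, on the relevant compact rectangle of $(l_i,t)$.

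Next I would differentiate the integrand. The constant term $e^{-(H^{min}_{RX}/15)^6}$ contributes nothing, and the chain rule gives
\begin{equation*}
\frac{\partial}{\partial l_i}\left(-e^{-\frac{(l_i^2-t)^3}{15^6}}\right) = \frac{6\, l_i\, (l_i^2 - t)^2}{15^6}\, e^{-\frac{(l_i^2-t)^3}{15^6}},
\end{equation*}
so $f_{L_i}(l_i) = \frac{C_2}{(R^{max}_{RX})^2}\int_0^{(R^{max}_{RX})^2}\frac{6\, l_i\, (l_i^2-t)^2}{15^6}e^{-\frac{(l_i^2-t)^3}{15^6}}\,dt$. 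The key observation is that, unlike in the CDF computation, this new integrand possesses an elementary antiderivative in $t$, so the Taylor-expansion device used for the CDF lemma is unnecessary here. I would then apply the equal substitution $s = l_i^2 - t$, $ds = -dt$, which sends the limits to $s = l_i^2$ (at $t=0$) and $s = l_i^2 - (R^{max}_{RX})^2$ (at $t = (R^{max}_{RX})^2$), and rewrite the integrand as $2\, l_i\cdot\frac{3 s^2}{15^6}e^{-s^3/15^6} = 2\, l_i\,\frac{d}{ds}\!\left(-e^{-s^3/15^6}\right)$. Evaluating this antiderivative between the new limits and accounting for the orientation reversal yields $2\, l_i\left(e^{-(l_i^2-(R^{max}_{RX})^2)^3/15^6} - e^{-l_i^6/15^6}\right)$; using $l_i^6/15^6 = (l_i/15)^6$ and restoring the prefactor $C_2/(R^{max}_{RX})^2$ gives exactly the claimed form (\ref{eq:case2_pdf}).

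The main obstacle is not the calculus, which is short, but the bookkeeping: correctly tracking the sign from $ds = -dt$ together with the reversal of the limits, and making sure the differentiation-under-the-integral step is applied on the range of $l_i$ on which the representation (\ref{eq25}) is actually valid (i.e.\ where $\sqrt{l_i^2-u^2}$ remains in $[H^{min}_{RX}, H^{max}_{RX}]$ for all $u\in[0,R^{max}_{RX}]$; outside that range the sub-cases of the integration region would have to be separated, as in the CDF step). A useful consistency check is that $f_{L_i}$ integrates back to the CDF values at the endpoints of that range, and that $f_{L_i}\ge 0$ there because $(l_i^2-(R^{max}_{RX})^2)^3 \le l_i^6$.
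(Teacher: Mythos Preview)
Your proposal is correct and follows essentially the same route as the paper: differentiate the single-integral representation of the CDF under the integral sign, then recognize that the resulting integrand has an elementary antiderivative in $t$ and evaluate. The only cosmetic difference is that the paper writes the integrand directly as $e^{-(l_i^2-t)^3/15^6}\,d\!\left(-(l_i^2-t)^3/15^6\right)$ rather than passing through your intermediate substitution $s=l_i^2-t$, but the computation is the same.
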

\begin{proof}
    Please refer to Appendix E.
\end{proof}

Finally, given (\ref{eq:case2_pdf}), the similar derivation method of the uplink CP in Case 1 can be leveraged to obtain the uplink CP in Case 2. For brief, we omit the detailed derivation processes. The expression of the uplink CP in Case 2 can be obtained by substituting (\ref{eq:case2_pdf}) into (\ref{eq:uplink_cp_case1_expression}).
{\subsection{CP Expression Derivation in Case 3}}
{In this case, given the definition expression of CP and the related probability distribution, the CP can be given by}
{
\begin{equation}
\label{eq:cp_originalcase3}
\begin{split}
    p_c &= \mathbb{P}[\frac{\mathcal{P}_r}{\sigma^2 + \mathcal{I}_\mathcal{Z}} > T_s | L_{TR}] \\
    &\hspace{-1em}= \mathbb{P}[\frac{\mathcal{P}_r}{\sigma^2 + \mathcal{I}_\mathcal{Z}} > T_s | L^{Case1}_{TR}, p_1] + \mathbb{P}[\frac{\mathcal{P}_r}{\sigma^2 + \mathcal{I}_\mathcal{Z}} > T_s | L^{Case2}_{TR}, p_2]  \\
    &\hspace{-1em}= p_1\cdot p^{Case1}_c(T_s, \lambda_n, \alpha, \epsilon) + p_2\cdot p^{Case2}_c(T_s, \lambda_n, \alpha, \epsilon)\\
\end{split}
\end{equation}
\hspace{1em}
Based on 
the derived CP expression (\ref{eq:uplink_cp_case1_expression}) and 
the direct propagation distance PDF of Case 1 and Case 2 (i.e., (\ref{case1l_pdf}) and (\ref{eq:case2_pdf})), the CP in Case 3 can be calculated.} 
\\

\section{Numerical Results}
In this section, we validate the analytical expressions via numerical simulations. Particularly, the numerical results are provided to assess the connection performance of the radiosonde network and verify the two proposed 3D models. The impacts of three key parameters, including power control factor, path loss exponent, and distribution intensity, on the CP are extensively discussed. Some useful insights are also revealed. The default parameter setting is summarized in Table \uppercase\expandafter{\romannumeral3}.

{The numerical simulations were conducted on MATLAB R2021a, which was used for the modeling and computation. The simulations also ran on a computer equipped with an Intel Core i7-10750H processor and 16 GB of memory, with Windows 10 as the operating system. 
In the simulations, the maximum horizontal distance within the communication range of RR is assumed to be 20 km, and the maximum and minimum vertical distance are respectively 20 km and 5 km. The transmission power of radiosondes is set at 33 dBm. The channel model adopts both Rayleigh fading and rain attenuation models. The connection probability of the network was simulated using the derived formulas and the Monte Carlo method. To ensure the stability of the results, each set of simulations using the Monte Carlo method was run 20,000 times, with the average value taken as the final result.}

Fig. \ref{fig:pdf} illustrates the PDF curves of both propagation distance and vertical distance in two cases. From this figure, we can observe that the PDF curves for vertical and propagation distances are similar. It shows that the propagation distance has a strong correlation with the vertical distance and a weak correlation with the horizontal distance. {Next, we discuss the impact of three parameters, including distribution density, path loss exponent, and power control factor, on the CP of the network. At the same time, a comparison is made between the model in this paper and the 3D spherical distribution model presented in \cite{new5}.}
{\subsection{The Impact of Parameter $\lambda$ on CP}}

\begin{table}[!t]
\caption{{Values of some parameters}}
\label{tab:table2}
\centering
\begin{tabular}{|c|c|}
\hline
$\lambda_n = \lambda_r = 0.01, 0.05/km^3$ & $\mu = 1$\\
\hline
$\alpha = 2.0, 4.0$ &$T_s = [-40 : 2 : 0]dB$\\
\hline
$\epsilon = 0, 0.5$ & $R^{max}_{RX} = 20km$\\
\hline
$\sigma^2 = -174dBm/Hz * 10MHz$ & $H^{max}_{RX} = 20km$\\
\hline
$\mathcal{P}_t = 2W = 33dBm$ & $H^{min}_{RX} = 5km$\\
\hline
$g_t = g_r = 2dB$ & $f = 400MHz$\\
\hline
\end{tabular}
\end{table}

\begin{figure}[!t]
\centering
\includegraphics[width=2.5in]{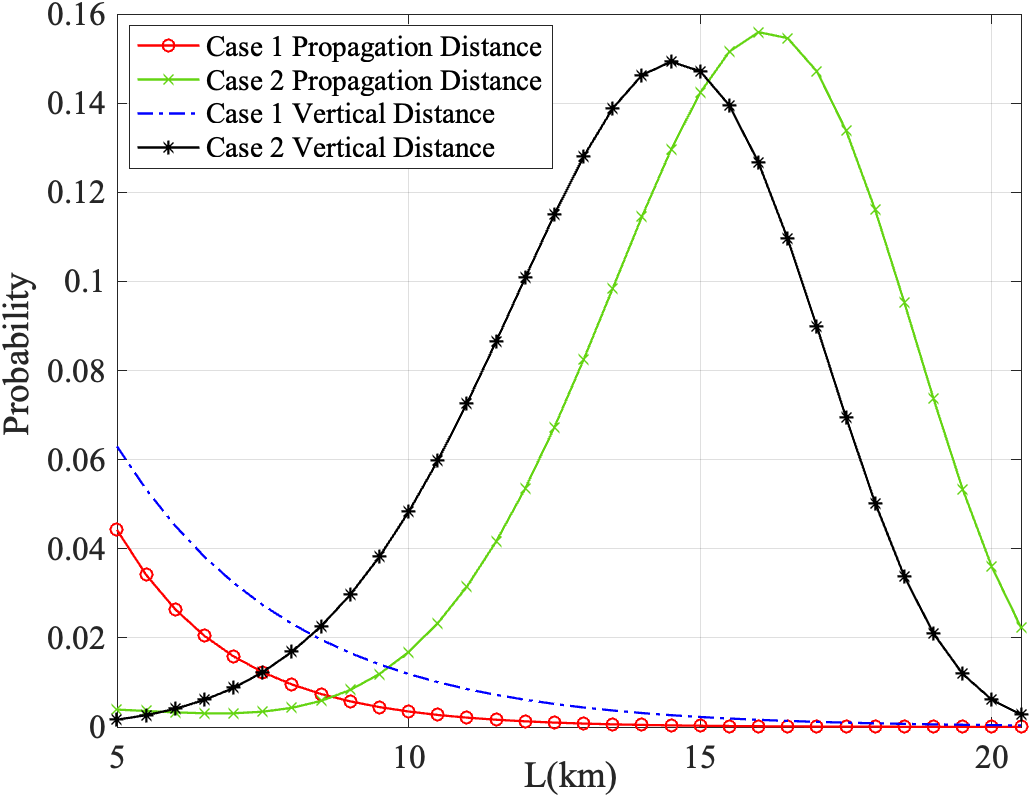}
\caption{The PDF of $H_i$ and $L_i$.}
\label{fig:pdf}
\vspace{-0.3cm}
\end{figure}

{In the scenario to be analyzed, determining the network structure is of primary importance, that is, confirming the number of radiosondes in the network, which is closely related to distribution density. The distribution density of radiosondes refers to the number of radiosondes per unit volume in the typhoon environment. Changes in distribution density will directly affect the structure of the network, which in turn impacts the CP of the network.}

\begin{figure}[!t]
\centering
\includegraphics[width=3in]{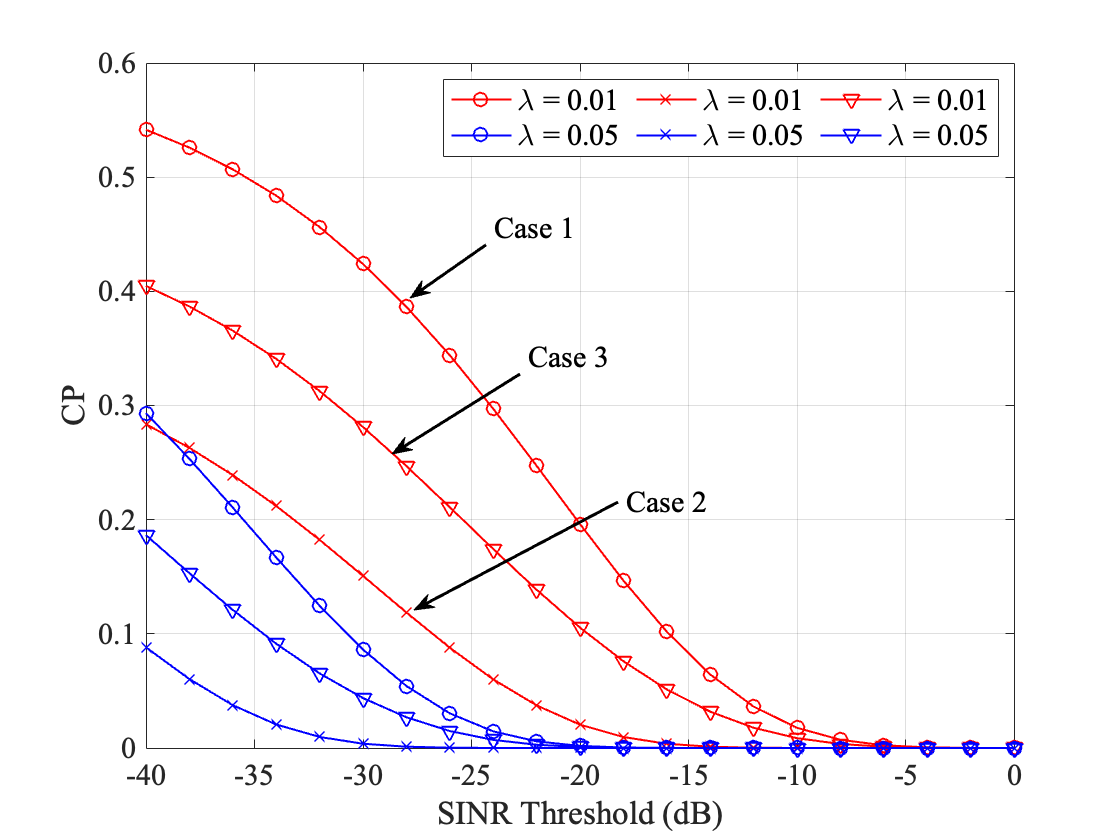}
\caption{The tendency of derived CP vs. SINR threshold \& $\lambda$.}
\label{fig:case123lambdaformula}
\vspace{-0.3cm}
\end{figure}

\begin{figure}[!t]
\centering
\includegraphics[width=3in]{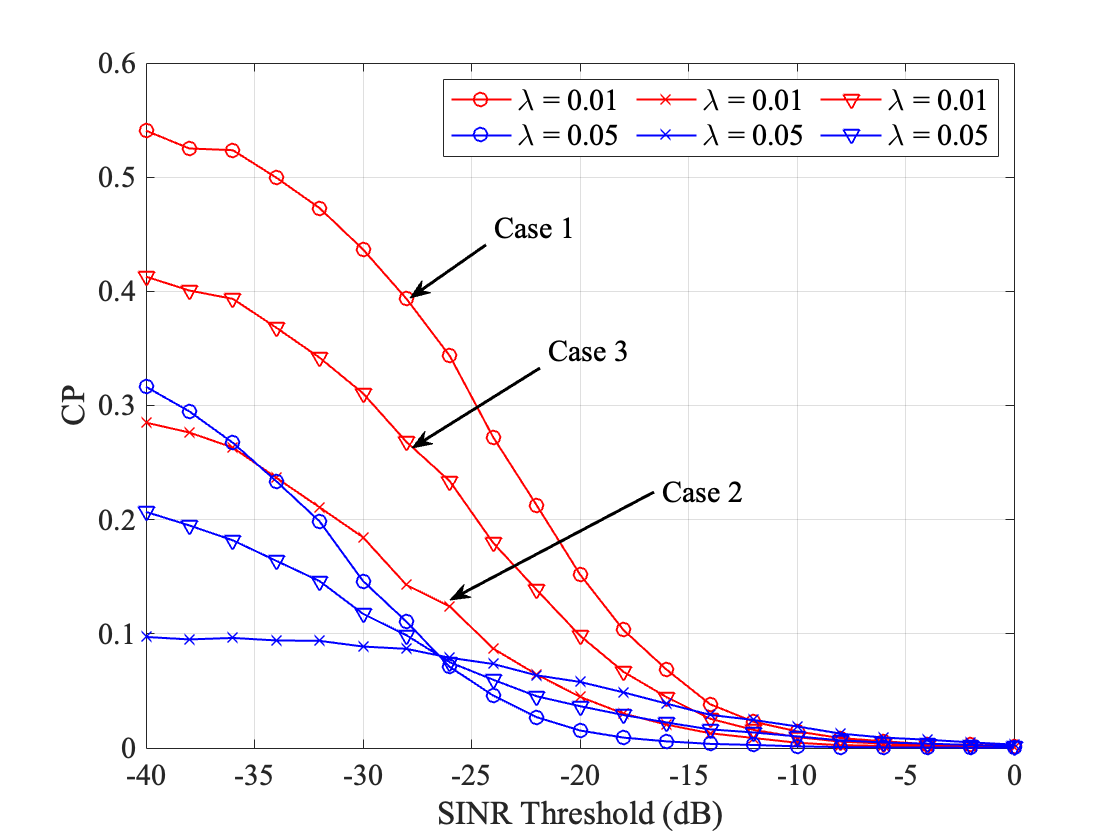}
\caption{The tendency of CP obtained by the Monte Carlo simulation vs. SINR threshold.}
\label{fig:case123lambdamc}
\vspace{-0.3cm}
\end{figure}

\begin{figure}[!t]
\centering
\includegraphics[width=3in]{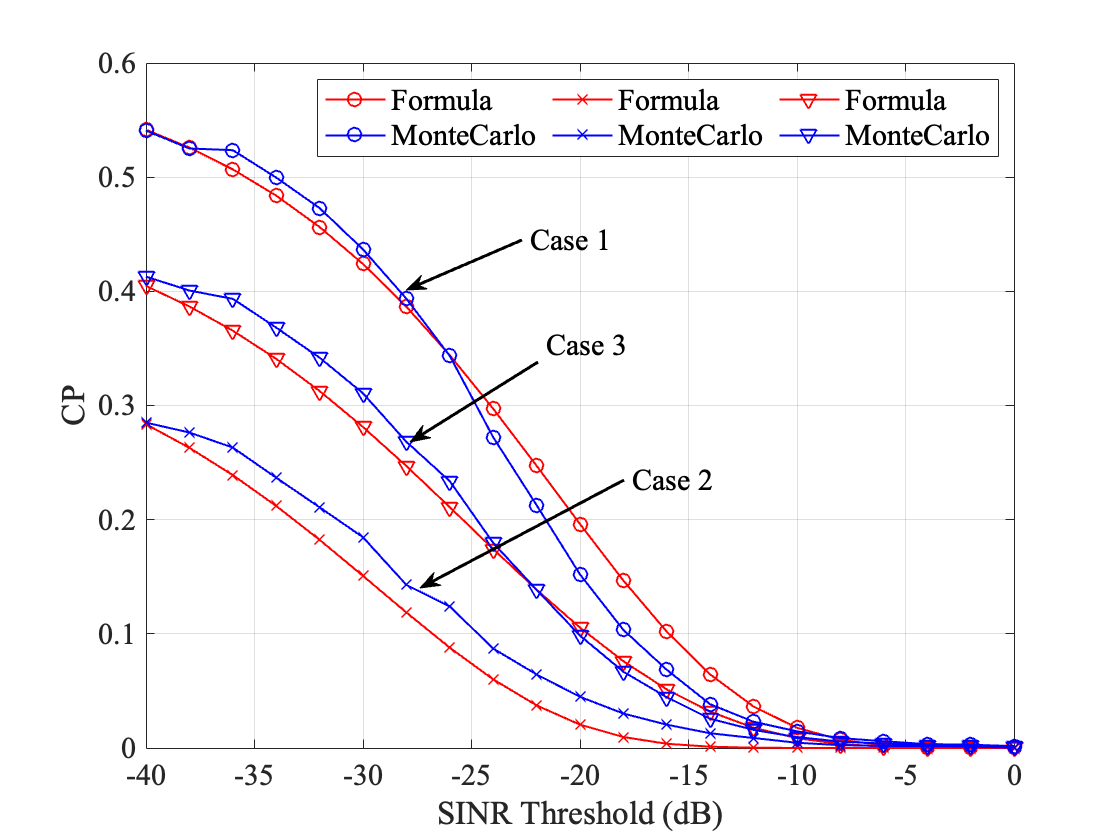}
\caption{A comparison of CP obtained by the theoretical derivation and the Monte Carlo simulation in three cases ($\lambda$ = 0.01).}
\label{fig:case3lambdac}
\vspace{-0.3cm}
\end{figure}

\begin{figure}[!t]
\centering
\includegraphics[width=3in]{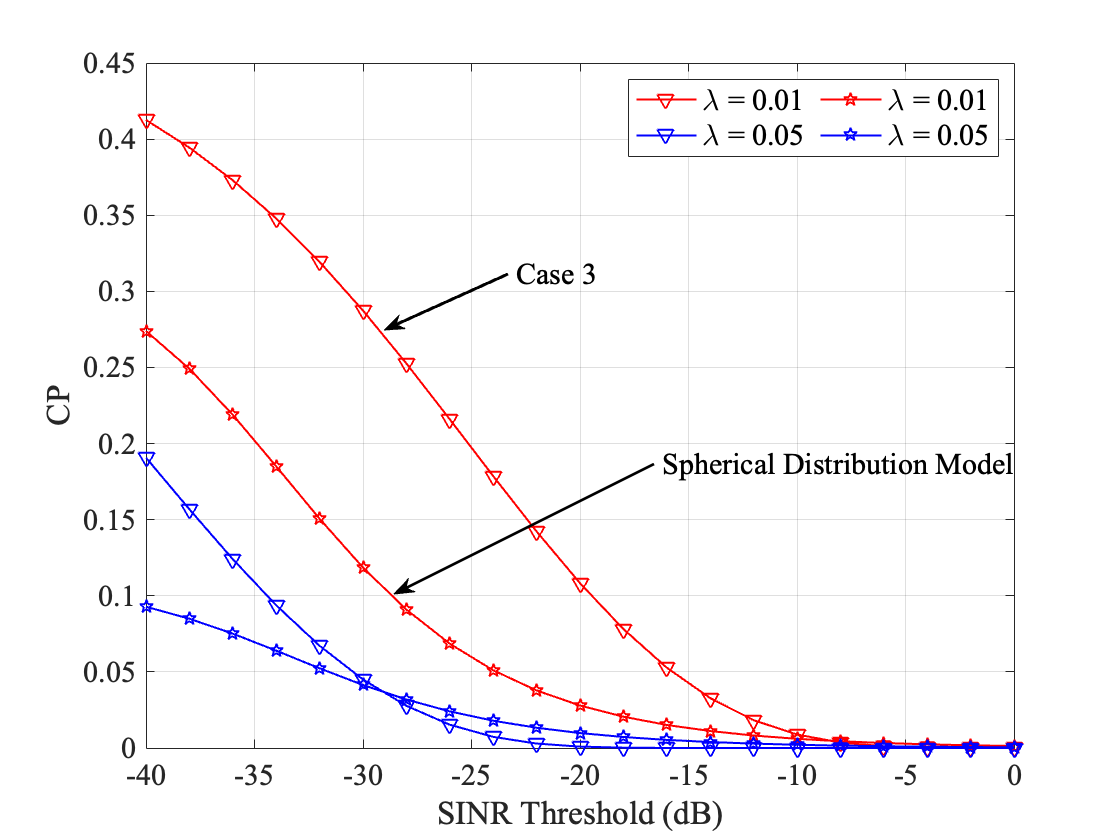}
\caption{A comparison of CP obtained by the theoretical derivation between Case 3 and spherical distribution model.}
\label{fig:case3spherelambdaf}
\vspace{-0.3cm}
\end{figure}

\begin{figure}[!t]
\centering
\includegraphics[width=3in]{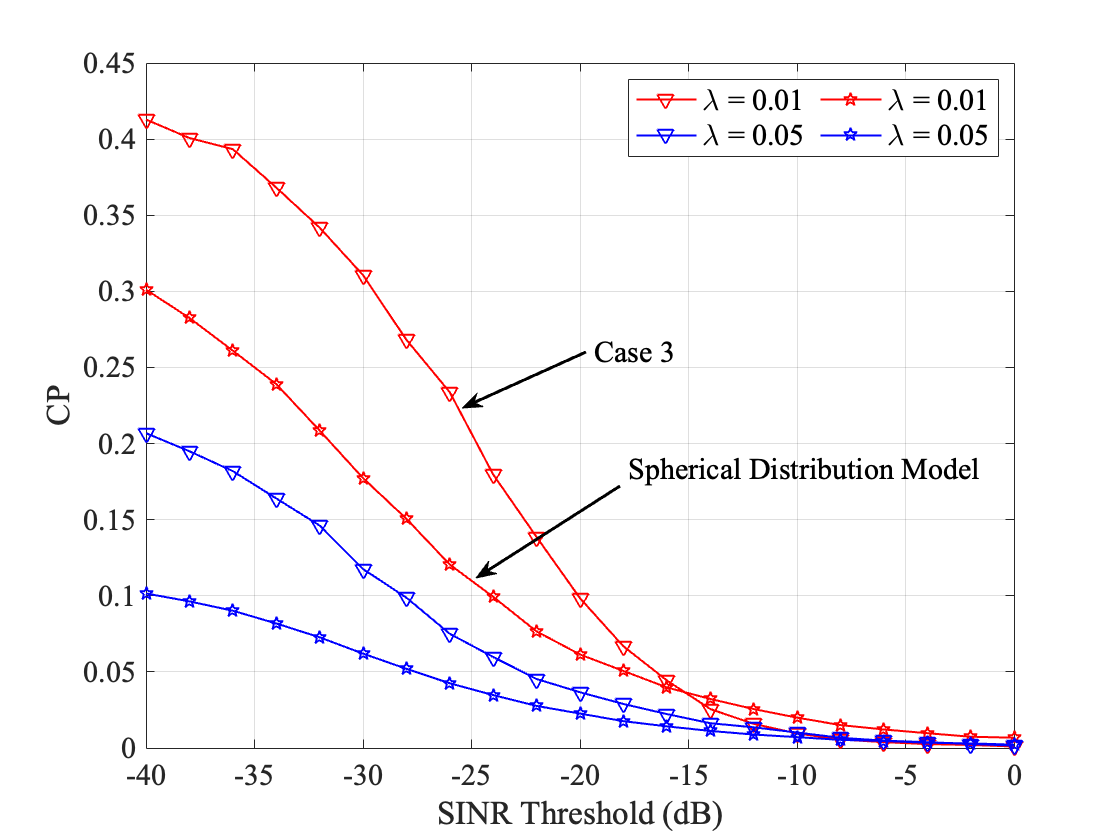}
\caption{A comparison of CP obtained by the Monte Carlo simulation between Case 3 and spherical distribution model.}
\label{fig:case3spheremlambdam}
\vspace{-0.3cm}
\end{figure}

\begin{figure}[!t]
\centering
\includegraphics[width=3in]{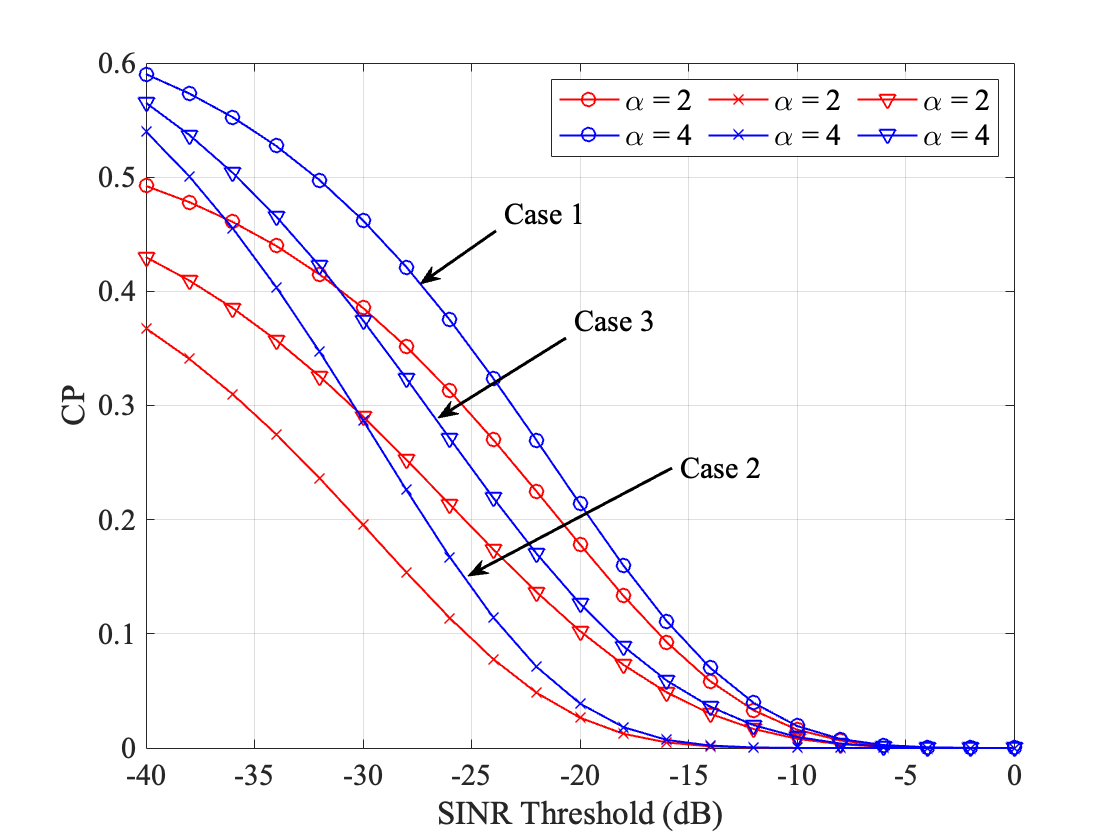}
\caption{The tendency of derived CP vs. SINR threshold \& $\alpha$.}
\label{fig:case123alphaformula}
\vspace{-0.3cm}
\end{figure}
Figs. \ref{fig:case123lambdaformula} and \ref{fig:case123lambdamc} illustrate how variations in the radiosonde density affect the uplink CP. The higher value of $\lambda$ leads to the smaller value of the CP, and the CP in Case 1 is less affected than that in Case 2 and Case 3. A larger radiosonde density means more radiosondes in the same area, which leads to greater interference in SINR. The randomly selected TR will be greatly influenced when it is in the concentrated distribution area of radiosondes. 

{In addition, the results indicate that the distribution in Case 2 is denser, leading to more severe interference affecting the analyzed TR, regardless of the value of $\lambda$. This can also explain that during the development phase of a typhoon, the radiosondes are more dispersed due to strong airflow influences, while during the stable phase, they may concentrate in a particular area, moving with the weak airflow.} Fig. \ref{fig:case3lambdac} shows the comparison results of Monte Carlo simulation and theoretical derivation. The trends of curves obtained by Monte Carlo simulation and theoretical derivation are the same. The results indicate that the values of CP obtained by the two methods are quite similar, especially in Case 3. This trend is also reflected in {Figs. \ref{fig:case3alphacompare} and \ref{fig:case3epsilonc}, suggesting} that the motion patterns and distribution within a typhoon cannot be fully described by a single case. Rather, the value of CP obtained through the combination of various cases may better represent network performance.

{Figs. \ref{fig:case3spherelambdaf} and \ref{fig:case3spheremlambdam} present the curves of CP with Case 3 and spherical distribution model when varying SINR threshold and $\lambda$. The results in Case 3 demonstrate advantages over those derived from the spherical distribution model. In the spherical distribution model with RR as the center, the volume of the region and distribution density dictate the number of radiosondes in a typhoon. Given that RR lies above the typhoon, radiosondes are confined to the lower hemisphere. This spatial limitation, combined with a fixed number of radiosondes, results in a more concentrated distribution, which heightens interference. As a result, the CP value calculated with the same parameters is lower compared to Case 3. This effect is similarly observed in subsequent analyses concerning the impact of the path loss exponent and power control factor on CP. The detailed results can be found in Figs. \ref{fig:case3spherealphaf}, \ref{fig:case3spherealpham}, \ref{fig:case3spheref}, and \ref{fig:case3SphereM}.}

\begin{figure}[!t]
\centering
\includegraphics[width=3in]{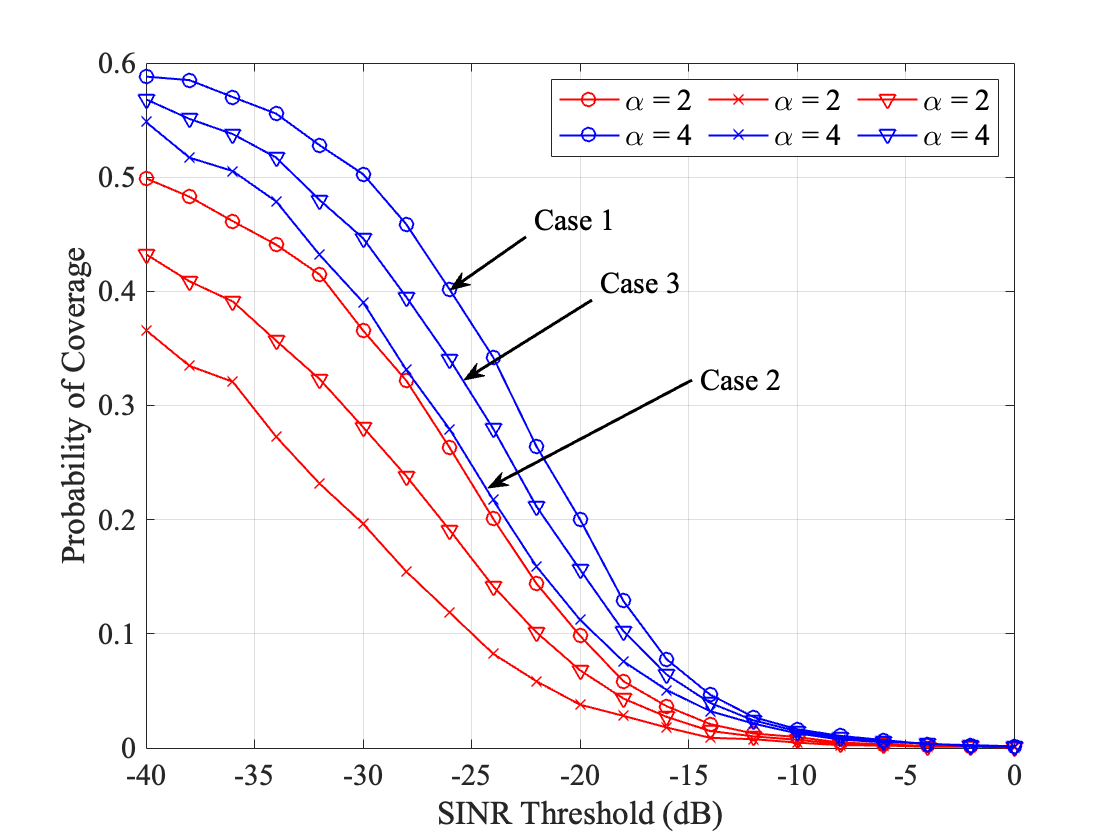}
\caption{The tendency of CP obtained by the Monte Carlo simulation vs. SINR threshold \& $\alpha$.}
\label{fig:case123alphamonte}
\vspace{-0.3cm}
\end{figure}

\begin{figure}[!t]
\centering
\includegraphics[width=3in]{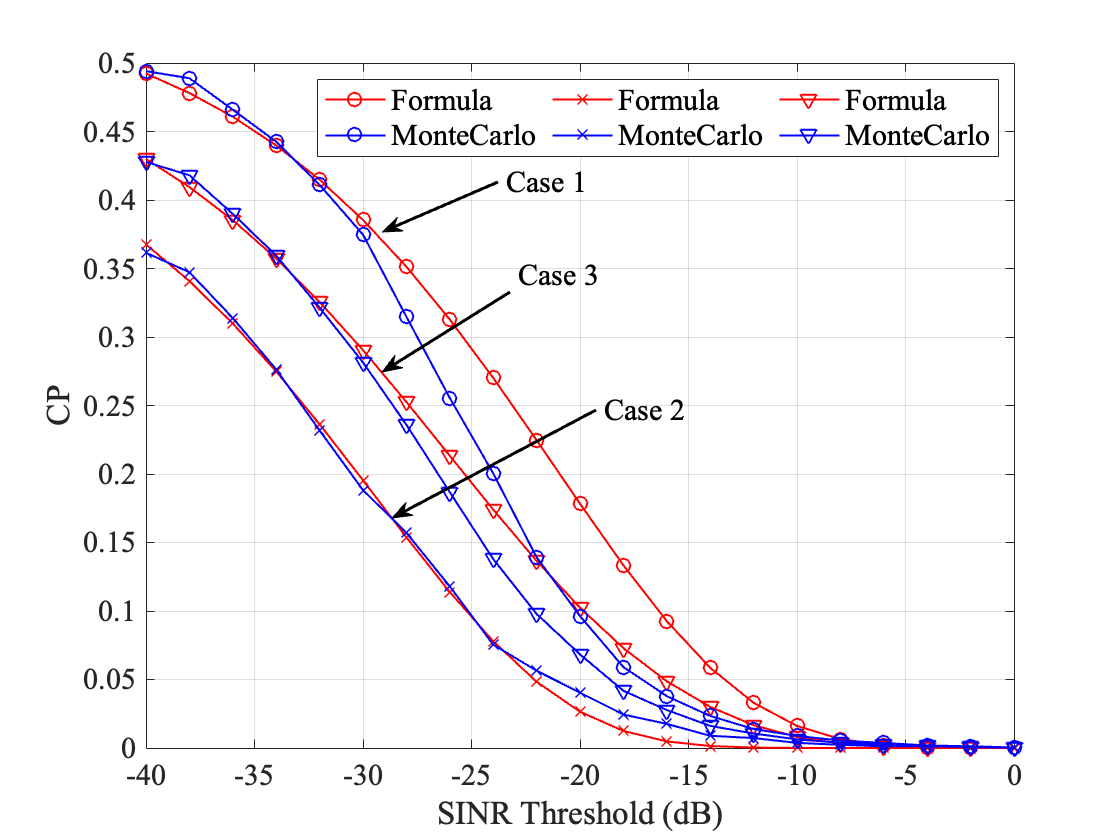}
\caption{A comparison of CP obtained by the theoretical derivation and the Monte Carlo simulation when varying SINR threshold in three cases ($\alpha = 2$).}
\label{fig:case3alphacompare}
\vspace{-0.3cm}
\end{figure}

{\subsection{The Impact of Parameter $\alpha$ on CP}}
{The path loss exponent describes the rate at which a signal decays over distance. When the path loss exponent is elevated, the signal experiences rapid attenuation during propagation, which implies that both the useful signal from the TR and the interference signals from other IRs will also decay quickly.} 

Figs. \ref{fig:case123alphaformula} and \ref{fig:case123alphamonte} depict the influence of variations in the path loss exponent on the CP in three cases. The observed patterns of change are consistent that the higher value of $\alpha$ results in the higher value of CP. Given that the TR is selected at random, the spatial propagation distance from TR to RR may not be the shortest. This means that some IRs are likely to be closer to RR than the TR. Due to the increasing value of $\alpha$, the value of $L_i^{-(\alpha+1)}$ will be smaller, which means greater path loss. Interference signals that are farther away attenuate more quickly, whereas those closer to RR experience slower decay. {\hspace{0.2em}The value of $\frac{L_{TR}^{-(\alpha+1)}}{\sum_{z \in \mathcal{Z}} L_z^{-(\alpha+1)}}$ cannot explain the increasing tendency of CP since it is impossible to distinguish which term decays faster. Instead, we apply theoretical analysis to investigate this phenomenon and derive the Eq. (\ref{cp_result}) for simplicity.}

\begin{figure}[!t]
\centering
\includegraphics[width=3in]{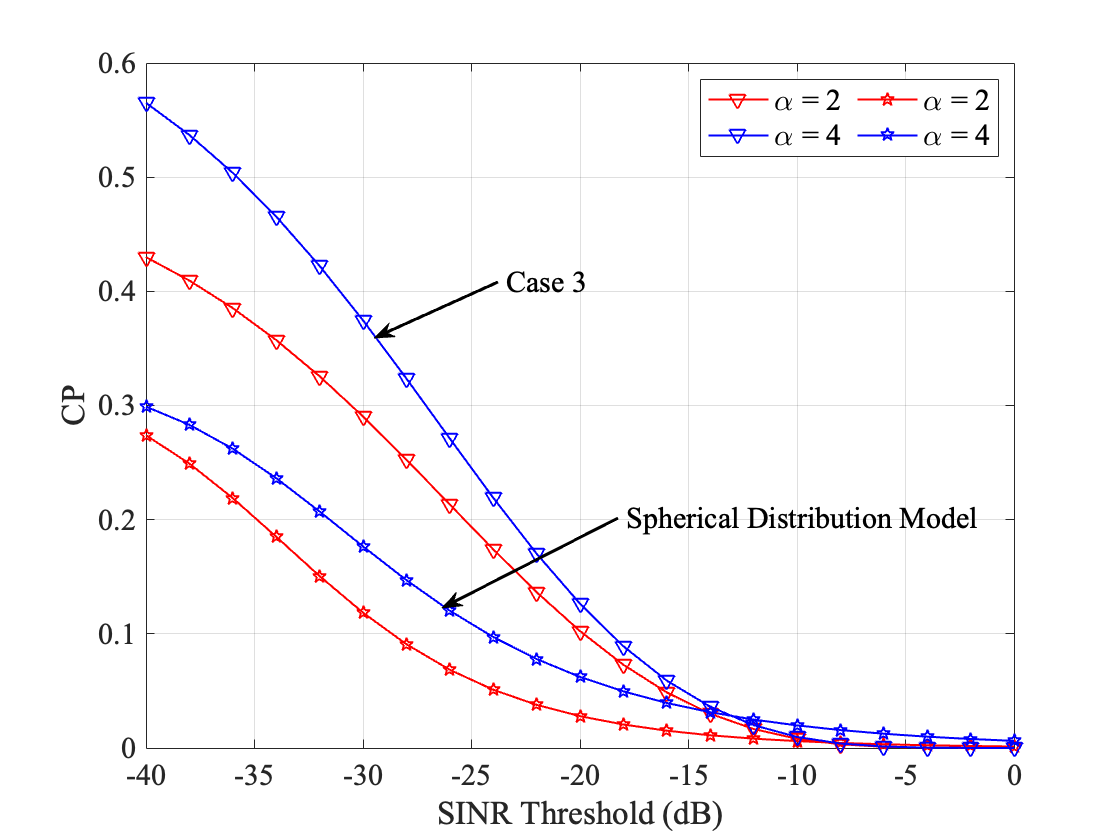}
\caption{A comparison of CP obtained by the theoretical derivation between Case 3 and spherical distribution model.}
\label{fig:case3spherealphaf}
\vspace{-0.3cm}
\end{figure}

\begin{figure}[!t]
\centering
\includegraphics[width=3in]{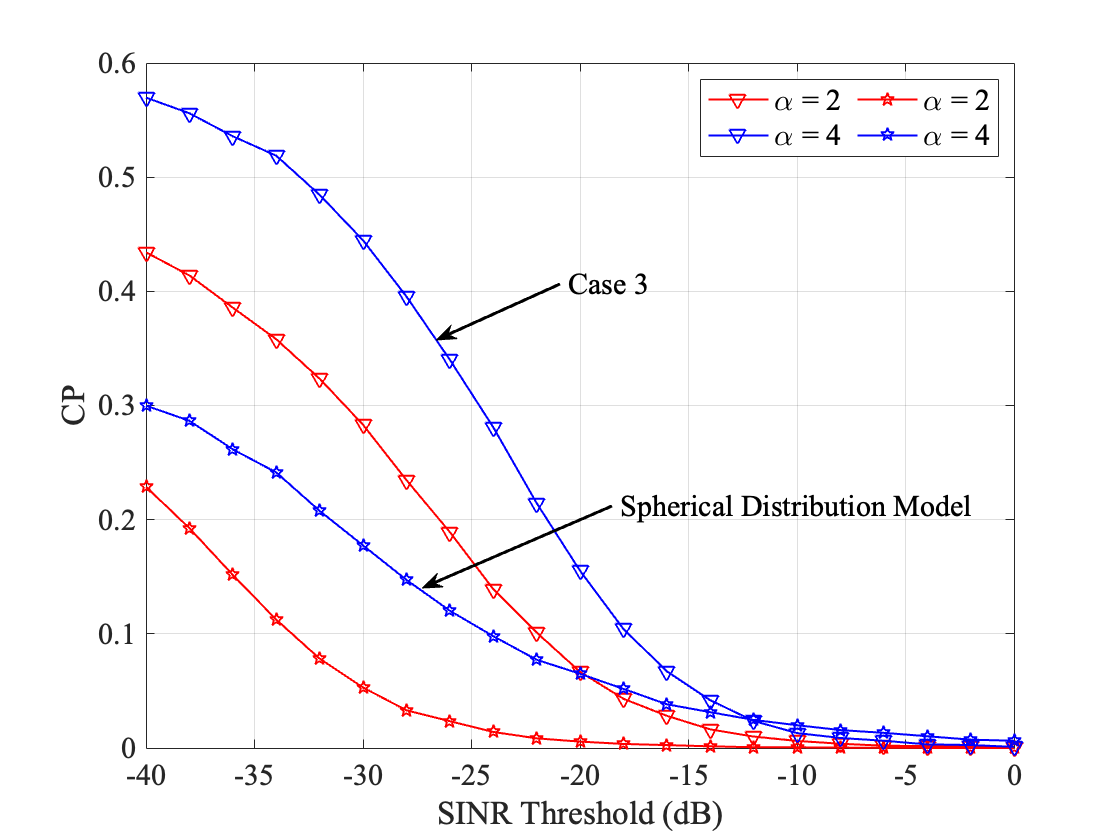}
\caption{A comparison of CP obtained by the Monte Carlo simulation between Case 3 and spherical distribution model.}
\label{fig:case3spherealpham}
\vspace{-0.3cm}
\end{figure}

{To analyze how changes in $\alpha$ affect the CP value, we can simplify the process by treating all other parameters as constants. Then, we calculate the derivative of CP with respect to $\alpha$. The sign of this derivative indicates whether the CP value increases or decreases. In Eq. (\ref{cp_result}), }
\begin{equation}
    \begin{split}
        \mathscr{L}_{\mathcal{I}^\prime_\mathcal{Z}}[&e^{-{\mu}TL^{\alpha+1}_{TR}\frac{H_{TR}-H^{min}_{RX}}{H^{\alpha\epsilon+1}_{TR}}}] = \\
    &\frac{\mu}{\mu + e^{-{\mu}TL^{\alpha+1}_{TR}\frac{H_{TR}-H^{min}_{RX}}{H^{\alpha\epsilon+1}_{TR}}}x^{-\alpha-1}\frac{y^{\alpha\epsilon + 1}}{y - H^{min}_{RX}}} 
    \end{split}
\end{equation}
{is the only term that includes $\alpha$. After further simplification, we can obtain $\frac{1}{1 + e^{-x^{\alpha+1}}x^{-\alpha-1}}$ by assuming $\epsilon = 0$ , $\mu = 1$, other terms to be 1 and $x$ to be constant. The process of derivative is shown below.}
\begin{equation}
    \begin{split}
        \frac{\partial}{\partial \alpha}\left(\frac{1}{1 + e^{-x^{\alpha+1}}x^{-\alpha-1}}\right) &= -\frac{\frac{\partial}{\partial \alpha}\left(e^{-x^{\alpha+1}}x^{-\alpha-1}\right)}{\left(1 + e^{-x^{\alpha+1}}x^{-\alpha-1}\right)^2} \\
        &\hspace{-10em}= - \frac{-e^{-x^{\alpha+1}}x^{\alpha+1}\ln{x} \cdot x^{-\alpha-1} + e^{-x^{\alpha+1}} \cdot x^{-\alpha-1}\left(-\ln{x}\right)}{\left(1 + e^{-x^{\alpha+1}}x^{-\alpha-1}\right)^2} \\
        &\hspace{-10em}= \frac{e^{-x^{\alpha+1}}\ln{x} + e^{-x^{\alpha+1}} \cdot x^{-\alpha-1}\ln{x}}{\left(1 + e^{-x^{\alpha+1}}x^{-\alpha-1}\right)^2} > 0
    \end{split}
\end{equation}

{It is obvious that the value of the derivative is greater than 0. This term exhibits an increasing tendency. Similarly, $-2\pi\lambda_n\int^{L^{max}_z}_{l_i}(1 - \int^{H^{max}_{RX}}_{H^{min}_{RX}}\frac{\mu}{\mu + sx^{-\alpha-1}\frac{y^{\alpha\epsilon + 1}}{y - H^{min}_{RX}}}\cdot f_{H_z}(y)dy)xdx$ has a monotonically increasing property. Ultimately, an increase in the value of $\alpha$ will lead to an increase in the value of CP.}

In Fig. \ref{fig:case3alphacompare}, we compare the simulation results of Case 3 via formula derivation and Monte Carlo simulation. With the same parameters, the CP of the Monte Carlo simulation is slightly smaller than that of the theoretical derivation. It indicates the reasonableness of the formula derivation, and the formula is referential. {In Figs. \ref{fig:case3spherealphaf} and \ref{fig:case3spherealpham}, the results show the better network performance of Case 3.}

\begin{figure}[!t]
\centering
\includegraphics[width=3in]{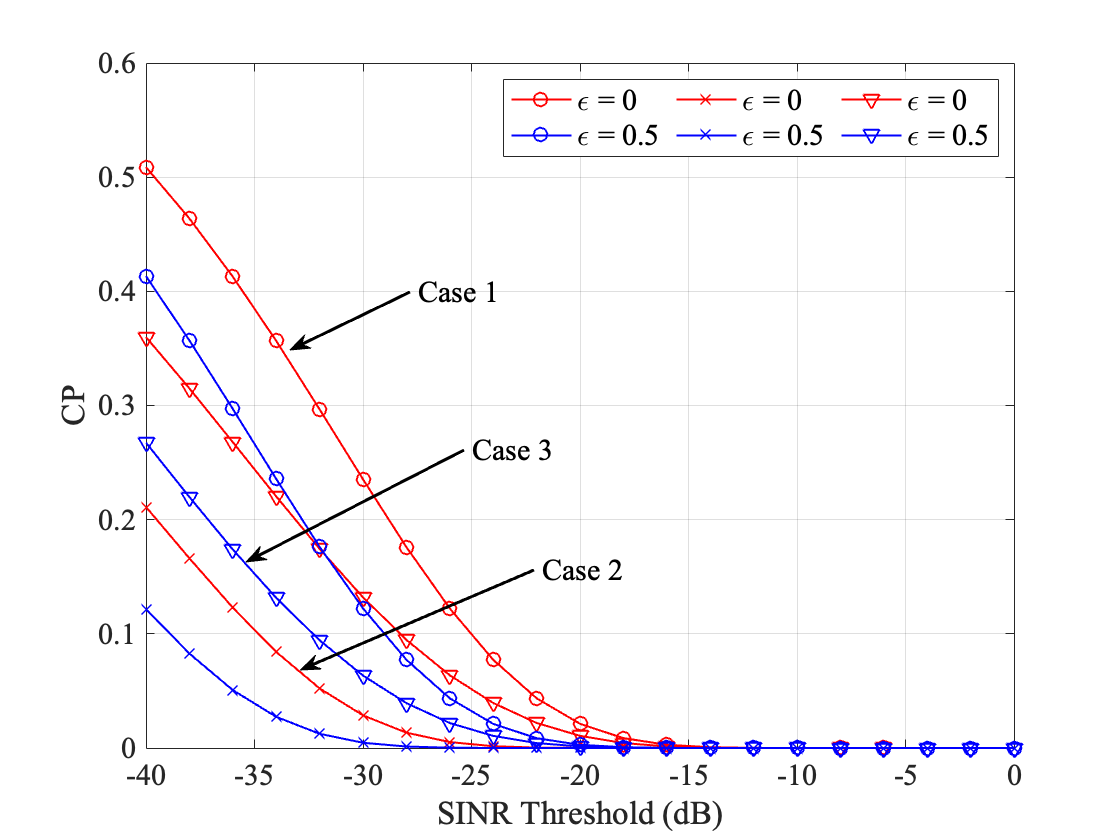}
\caption{The tendency of derived CP vs. SINR threshold \& $\epsilon$.}
\label{fig:case123epsilonf}
\vspace{-0.3cm}
\end{figure}

\begin{figure}[!t]
\centering
\includegraphics[width=3in]{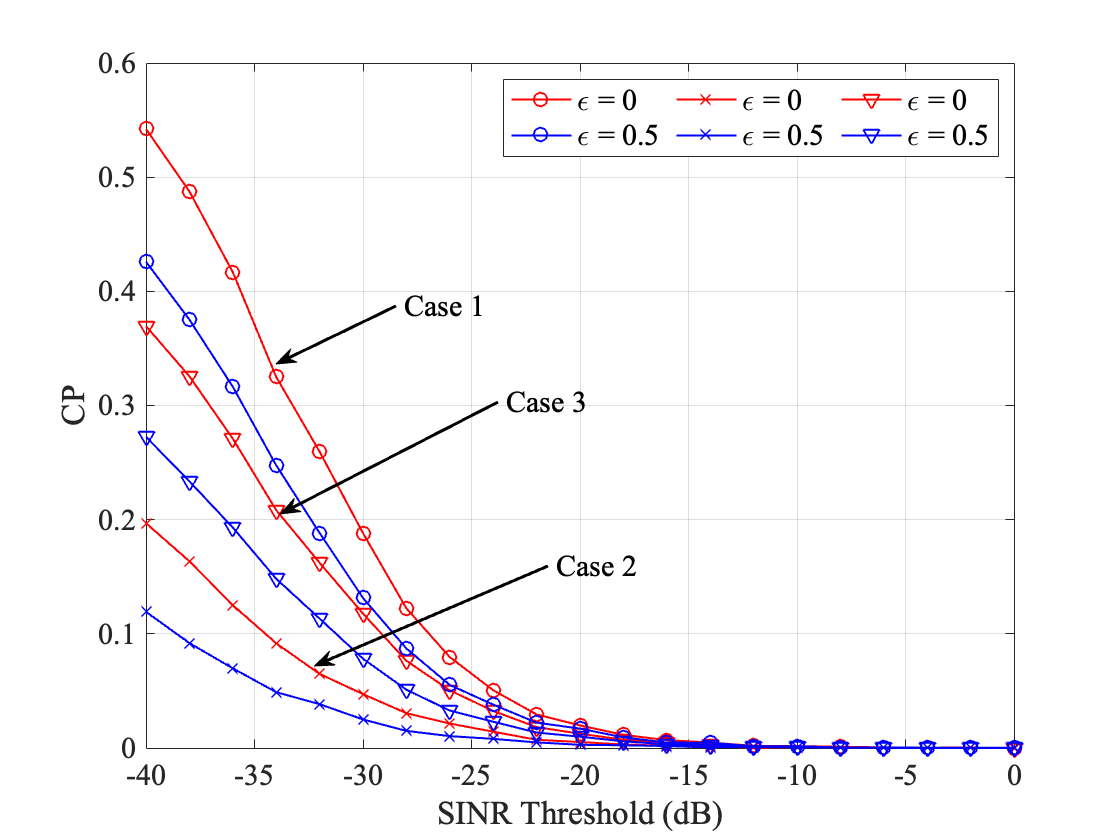}
\caption{The tendency of CP obtained by the Monte Carlo simulation vs. SINR threshold \& $\epsilon$.}
\label{fig:case123epsilonm}
\vspace{-0.3cm}
\end{figure}

\begin{figure}[!t]
\centering
\includegraphics[width=3in]{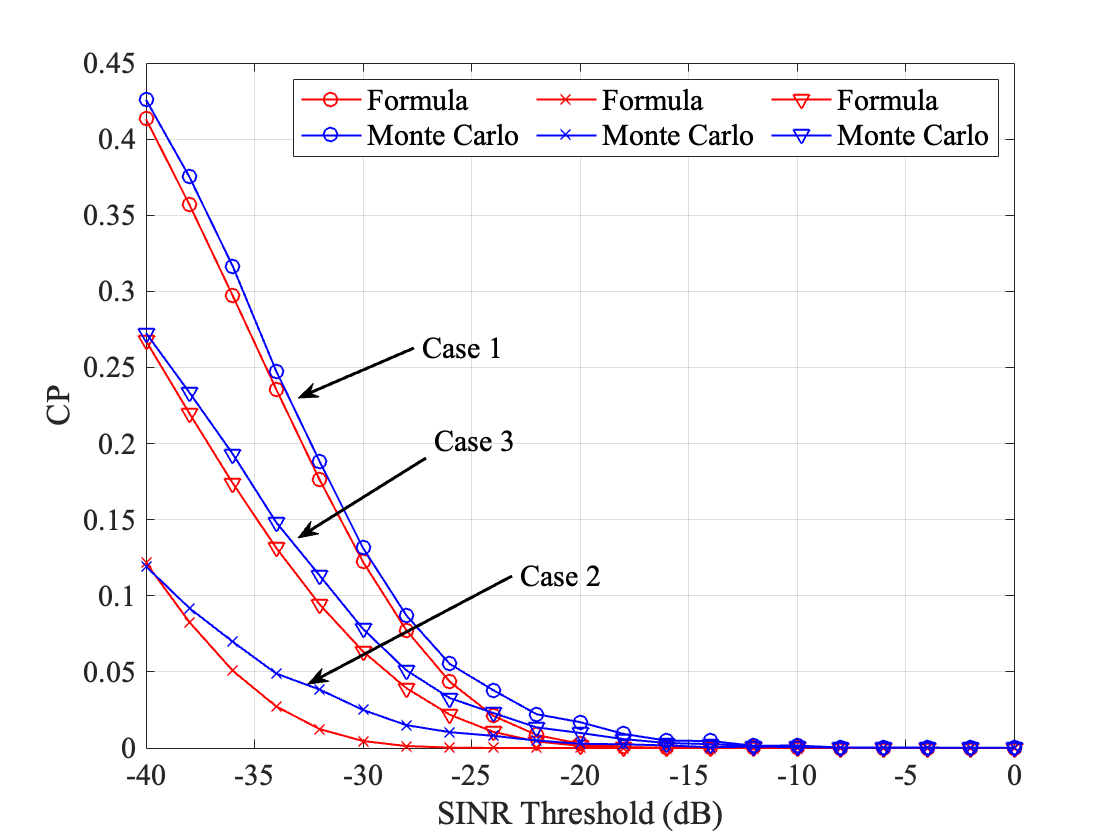}
\caption{A comparison of CP obtained by the theoretical derivation and the Monte Carlo simulation when varying SINR threshold in three cases ($\epsilon = 0.5$).}
\label{fig:case3epsilonc}
\vspace{-0.5cm}
\end{figure}

\begin{figure}[!t]
\centering
\includegraphics[width=3in]{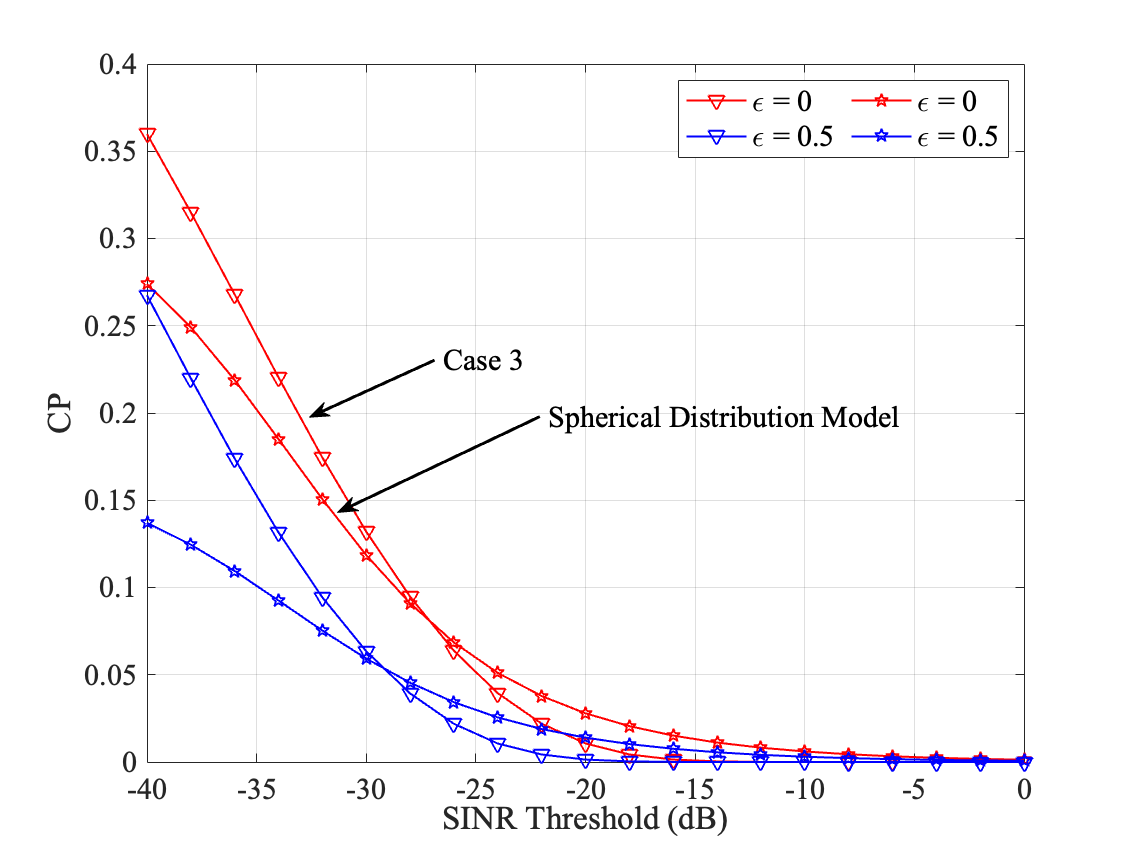}
\caption{A comparison of CP obtained by the theoretical derivation between Case 3 and spherical distribution model.}
\label{fig:case3spheref}
\vspace{-0.3cm}
\end{figure}

\begin{figure}[!t]
\centering
\includegraphics[width=3in]{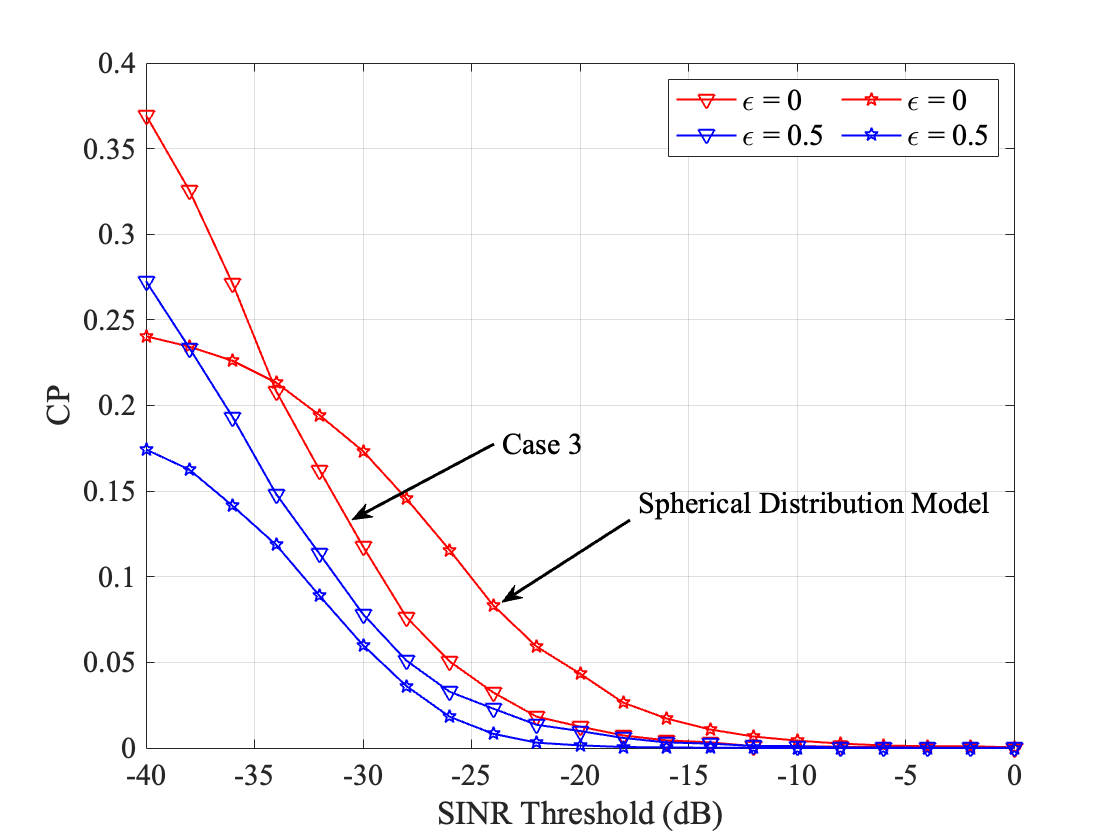}
\caption{A comparison of CP obtained by the Monte Carlo simulation between Case 3 and spherical distribution model when varying SINR threshold \& $\epsilon$.}
\label{fig:case3SphereM}
\vspace{-0.3cm}
\end{figure}

{\subsection{The Impact of Parameter $\epsilon$ on CP}}
{The power control factor is used to adjust the transmit power to adapt to different path loss conditions and maintain the signal strength of the communication.} 

Figs. \ref{fig:case123epsilonf} and \ref{fig:case123epsilonm} present the curves of CP with different $\epsilon$ versus $T_s$ in three cases when other parameters are the same. The shared observation from the cases is that, with an increasing $\epsilon$, the CP diminishes. As $\epsilon$ rises, the power compensation $\frac{H^{\alpha\epsilon + 1}_{TR}}{H_{TR} - H^{min}_{RX}}$ for receiving useful signals gradually grows. {However, this concurrently leads to an amplification in the compensation  $\sum_{z \in \mathcal{Z}}\frac{H^{\alpha\epsilon+1}_z}{H_z - H^{min}_{RX}}$ for multiple interference signals. Therefore, $\frac{H^{\alpha\epsilon + 1}_{TR}}{H_{TR} - H^{min}_{RX}}$ / $\sum_{z \in \mathcal{Z}}\frac{H^{\alpha\epsilon+1}_z}{H_z - H^{min}_{RX}}$ cannot explain the downward trend.} {To this end, we employ a dimensionality reduction scheme to analyze the impact of $\epsilon$ theoretically.} 

{After a series of simplifications to Eq. (\ref{cp_result}), we can obtain the only term $\frac{1}{1 + e^{y^{-(\alpha\epsilon+1)}}y^{\alpha\epsilon+1}}$ including $\epsilon$. The derivative is given by}
\begin{equation}
    \begin{split}
        \frac{\partial}{\partial \epsilon}\left(\frac{1}{1 + e^{y^{-(\alpha\epsilon+1)}}y^{\alpha\epsilon+1}}\right) &= -\frac{\frac{\partial}{\partial \epsilon}\left(e^{y^{-(\alpha\epsilon+1)}}y^{\alpha\epsilon+1}\right)}{\left({1 + e^{y^{-(\alpha\epsilon+1)}}y^{\alpha\epsilon+1}}\right)^2} \\
        &\hspace{-10em}= - \frac{-\alpha e^{y^{-(\alpha\epsilon+1)}}y^{-(\alpha\epsilon+1)}\ln{y} \cdot y^{\alpha\epsilon+1} + \alpha e^{y^{-(\alpha\epsilon+1)}} \cdot y^{\alpha\epsilon+1}\ln{y}}{\left(1 + e^{-x^{\alpha+1}}x^{-\alpha-1}\right)^2} \\
        &\hspace{-10em}= - \frac{\alpha e^{y^{-(\alpha\epsilon+1)}}\ln{y} \cdot (y^{\alpha\epsilon+1} - 1)}{\left(1 + e^{-x^{\alpha+1}}x^{-\alpha-1}\right)^2} < 0 \hspace{2em}(y^{\alpha\epsilon+1} > 1)
    \end{split}
\end{equation}

{Similarly, $-2\pi\lambda_n\int^{L^{max}_z}_{l_i}(1 - \int^{H^{max}_{RX}}_{H^{min}_{RX}}\frac{\mu}{\mu + sx^{-\alpha-1}\frac{y^{\alpha\epsilon + 1}}{y - H^{min}_{RX}}}\cdot f_{H_z}(y)dy)xdx$ shows the decreasing trend, which clarifies the result that a higher $\epsilon$ value corresponds to a decreased CP.}

Fig. \ref{fig:case3epsilonc} shows that the results of the theoretical derivation and the Monte Carlo method differ slightly, with differences in a reasonable and acceptable error range. This also indicates that the theoretical derivation can effectively predict the network performance. 
{Figs. \ref{fig:case3spheref} and \ref{fig:case3SphereM} depict the comparison of CP obtained by the theoretical derivation and the Monte Carlo simulation between Case 3 and spherical distribution model when varying SINR threshold. In this scenario, neither model is suitable for power compensation, although the results of Case 3 are slightly better.}

{\subsection{Summary of Results}}
In summary, it is inevitable to increase the number of radiosondes in order to collect more typhoon information, which means that the density of radiosondes in the same area will increase. However, it will introduce stronger interference. Thus, trade-offs should be made when considering increasing the number of radiosondes. In the typhoon environment, the path loss and rain attenuation are greatly related to the propagation distance; radiosondes that are farther away from the receiver suffer greater path loss, and the rain attenuation is more serious. By comparing the two cases, it can be seen that if the distribution of the radiosondes is more concentrated, it will cause stronger interference. When the distribution is more dispersed, the network performance will be improved. The power control factor has a negative effect on the CP of the radiosonde network because all the radiosondes are power-compensated. The farther propagation distance results in the greater compensation of power, and then the final CP is reduced.
\\

\section{Conclusion}
This paper theoretically analyzed the uplink connection performance of a radiosonde network deployed in a typhoon. 
Particularly, this paper separately modeled the distributions of horizontal and vertical distances from a radiosonde to its receiver. Given the horizontal and vertical distance distributions, the closed-form CDF and PDF expressions of a radiosonde's 3D propagation distance to its receiver were derived. The analytical expression of the uplink CP of a radiosonde was also obtained. Besides, this paper conducted extensive numerical simulations and comparisons between different models to validate the theoretical analysis. The influence of some parameters (e.g., power control factor and node density) were investigated.

{Nevertheless, there are several issues in the paper waiting to be addressed. The high-speed movement of radiosondes within a typhoon causes frequent Doppler shifts in the signal, which impacts the CP of the network. The calculation of network CP is limited to a specific moment or distribution, while the variable airflow within a typhoon may prevent the use of the same distribution or model over time. Radiosondes within a typhoon may have relatively short lifespans and are highly susceptible to damage.}

{To advance this research, we will focus on creating a model that integrates the lifecycle of radiosondes, using insights from fluid dynamics and the internal structure of typhoons to analyze their starting points and destinations. This approach enables modeling of radiosonde distributions across various regions within the typhoon. Further, by incorporating factors like Doppler shifts and additional attenuation effects into current channel models, it's possible to achieve network performance outcomes that better reflect actual conditions.}


 \appendix
 \subsection{CDF of the propagation distance $L_i$ in Case 1}
 With (\ref{eq17}), we first perform basic integration operations on $\mathcal{F}_{L_i}\left(l_i\right)$, it can be simplified as
 \begin{equation*}\label{eq28}
 \begin{split}
     \mathcal{F}_{L_i}\left(l_i\right) &= C_1\int^{R^{max}_{RX}}_0 2\pi\lambda_nue^{-\pi\lambda_nu^2}\int^{\sqrt{l^2_i-u^2}}_{H^{min}_{RX}} \frac{1}{2}e^{-\frac{v}{2}} dvdu \\
     &= C_1\int^{R^{max}_{RX}}_0 2\pi\lambda_nue^{-\pi\lambda_nu^2} \left(e^{-\frac{{H^{min}_{RX}}}{2}} - e^{-\frac{\sqrt{l^2_i-u^2}}{2}}\right) du \\
 \end{split}
 \end{equation*}
 \begin{equation}
     \begin{split}        
     &= C_1[e^{-\frac{{H^{min}_{RX}}}{2}}\int^{R^{max}_{RX}}_0 2\pi\lambda_nue^{-\pi\lambda_nu^2}du  -\\ &\hspace{4em}\int^{R^{max}_{RX}}_0 2\pi\lambda_nue^{-\pi\lambda_nu^2}e^{-\frac{\sqrt{l^2_i-u^2}}{2}}du]
 \end{split}
 \end{equation}

 At this point, both basic integral and improper integral appear in (\ref{eq28}). Due to the complicated process of dealing with the improper integral, we will calculate them separately. Denote the basic integral as $F_1$ and the improper integral as $F_2$.

 The closed-form expression of $F_1$ can take the following form
 \begin{equation}\label{F1}
 \begin{split}
     F_1 &= e^{-\frac{{H^{min}_{RX}}}{2}}\int^{R^{max}_{RX}}_0 2\pi\lambda_nue^{-\pi\lambda_nu^2}du \\
     &= e^{-\frac{{H^{min}_{RX}}}{2}}\int^{R^{max}_{RX}}_0 -e^{-\pi\lambda_nu^2}d\left({-\pi\lambda_nu^2}\right)\\
     &= -e^{-\frac{{H^{min}_{RX}}}{2}}e^{-\pi\lambda_nu^2}\left|^{R^{max}_{RX}}_0\right.\\
     &= e^{-\frac{{H^{min}_{RX}}}{2}}[1 - e^{-\pi\lambda_n{(R^{max}_{RX})}^2}]\\
 \end{split}
 \end{equation}

 For the improper integral, we eliminate the square root term in it via equal substitution. Let $\sqrt{l^2_i-u^2} = t$, $-2rdr = 2tdt$, $r \in [0,R^{max}_{RX}], t \in [\sqrt{l^2_i-(R^{max}_{RX})^2},l_i]$, we obtain
 \begin{equation}
 \begin{split}
     F_2 &= \int^{R^{max}_{RX}}_0 2\pi\lambda_nue^{-\pi\lambda_nu^2}e^{-\frac{\sqrt{l^2_i-u^2}}{2}}du\\
     &= \int^{l_i}_{\sqrt{l^2_i-(R^{max}_{RX})^2}} 2\pi\lambda_nte^{-\pi\lambda_n{(l^2_i-t^2)}}e^{-\frac{t}{2}}dt\\
     &= e^{-\pi\lambda_nl^2_i} \int^{l_i}_{\sqrt{l^2_i-(R^{max}_{RX})^2}} 2\pi\lambda_nte^{\pi\lambda_nt^2-\frac{t}{2}}dt\\
 \end{split}
 \end{equation}

 Subsequently, we transform the order of the exponential terms into the form of a complete sum of square. The terms unrelated to $t$ can be considered as constants and are moved outside the integral sign. The basic integral generated in this process can be directly calculated to obtain the result. The derivation is as follows
 \begin{equation}\label{F2}
 \begin{split}
     F_2&= e^{-\pi\lambda_nl^2_i} \int^{l_i}_{\sqrt{l^2_i-(R^{max}_{RX})^2}} \left[2\pi\lambda_n(t - \frac{1}{4\pi\lambda_n}) + \frac{1}{2}\right] \times \\
     &\hspace{6em} e^{\pi\lambda_n{(t - \frac{1}{4\pi\lambda_n})}^2 - \frac{1}{16\pi\lambda_n}}dt\\
     &= e^{-\pi\lambda_nl^2_i - \frac{1}{16\pi\lambda_n}}\int^{l_i}_{\sqrt{l^2_i-(R^{max}_{RX})^2}} [2\pi\lambda_n(t - \frac{1}{4\pi\lambda_n})\times \\ 
     &\hspace{6em} e^{\pi\lambda_n{(t - \frac{1}{4\pi\lambda_n})}^2} + \frac{1}{2}e^{\pi\lambda_n{(t - \frac{1}{4\pi\lambda_n})}^2}]dt\\
     &= e^{-\frac{l_i}{2}} - e^{-\pi\lambda_n(R^{max}_{RX})^2-\frac{\sqrt{l^2_i -(R^{max}_{RX})^2}}{2}} + \\
     &\hspace{6em}\int^{l_i}_{\sqrt{l^2_i-(R^{max}_{RX})^2}}\frac{1}{2}e^{\pi\lambda_n{(t - \frac{1}{4\pi\lambda_n})}^2}dt\\
 \end{split}
 \end{equation}

 Unfortunately, it is challenging to calculate the integral $\int^a_b e^{x^2}dx$ in a closed interval $[a,b]$. Taylor formula can be utilized to obtain its closed-form expression, denoted by $F_3$, with
 \begin{equation}\label{F3}
 \begin{split}
     F_3 &= \int^{l_i}_{\sqrt{l^2_i-(R^{max}_{RX})^2}}\frac{1}{2}e^{\pi\lambda_n{(t - \frac{1}{4\pi\lambda_n})}^2}dt\\
     &= \frac{e^{-\pi\lambda_n{l_i}^2}-\frac{1}{16\pi\lambda_n}}{2} \times \\
     &\hspace{-1em}\sum_{n=0}^{\infty}\frac{(\pi\lambda_n)^n}{n!}\frac{({l_i-\frac{1}{4\pi\lambda_n}})^{2n+1}-({\sqrt{l^2_i-(R^{max}_{RX})^2}-\frac{1}{4\pi\lambda_n}})^{2n+1}}{2n+1}\\
 \end{split}
 \end{equation}

 Ultimately, with (\ref{F1}), (\ref{F2}), and (\ref{F3}), the CDF of $L_i$ can be expressed as
 \begin{equation}
 \begin{split}
     \mathcal{F}_{L_i}\left(l_i\right) &= C_1[e^{-\frac{{H^{min}_{RX}}}{2}}[1 - e^{-\pi\lambda_n{(R^{max}_{RX})}^2}] - e^{-\frac{l_i}{2}}\\
     &+ e^{-\pi\lambda_n(R^{max}_{RX})^2-\frac{\sqrt{l^2_i -(R^{max}_{RX})^2}}{2}}- \\
     &\hspace{-4em}\sum_{n=0}^{\infty}\frac{(\pi\lambda_n)^n}{n!}\frac{({l_i-\frac{1}{4\pi\lambda_n}})^{2n+1}-({\sqrt{l^2_i-(R^{max}_{RX})^2}-\frac{1}{4\pi\lambda_n}})^{2n+1}}{2n+1}]
 \end{split}
 \end{equation}

 This completes the proof.

 \subsection{PDF of the propagation distance $L_i$ in Case 1}
 With (\ref{eq28}) and the theorem of differentiation under the integral sign, we obtain
 \begin{equation}
     \begin{split}
         f_{L_i}(l_i) &= \frac{dF_{L_i}(l_i)}{dl_i}\\
         &\hspace{-3em}= C_1\int^{R^{max}_{RX}}_0 \frac{\partial}{{\partial}l_i}[2\pi\lambda_nue^{-\pi\lambda_nu^2} \left(e^{-\frac{{H^{min}_{RX}}}{2}} - e^{-\frac{\sqrt{l^2_i-u^2}}{2}}\right)]du\\
         &\hspace{-3em}= C_1\int^{R^{max}_{RX}}_0 2\pi\lambda_nue^{-\pi\lambda_nu^2}e^{-\frac{\sqrt{l^2_i-u^2}}{2}}\frac{l_i}{2\sqrt{l^2_i-u^2}}du\\
     \end{split}
 \end{equation}

 Similarly, there are square root terms and exponential terms in the integral. Equal substitution ($\sqrt{l^2_i-u^2} = t$) can be also conducted here. Then we proceed with transforming the order to complete sum of a square and converting the improper integral term to Taylor formula sequentially. The process of integration and the final result are shown as below
 \begin{equation}
     \begin{split}
         f_{L_i}(l_i)&= C_1\int^{l_i}_{\sqrt{l^2_i-(R^{max}_{RX})^2}} 2\pi\lambda_nte^{-\pi\lambda_n{(l^2_i-t^2)}}e^{-\frac{t}{2}}\frac{l_i}{2t}dt\\
         &= C_1\pi\lambda_nl_ie^{-\pi\lambda_nl^2_i} \int^{l_i}_{\sqrt{l^2_i-(R^{max}_{RX})^2}}e^{\pi\lambda_nt^2-\frac{t}{2}}dt\\
         &= C_1\pi\lambda_nl_ie^{-\pi\lambda_nl^2_i-\frac{1}{16\pi\lambda_n}} \times\\
         &\hspace{-3em}\sum_{n=0}^{\infty}\frac{(\pi\lambda_n)^n}{n!}\frac{({l_i-\frac{1}{4\pi\lambda_n}})^{2n+1}-({\sqrt{l^2_i-(R^{max}_{RX})^2}-\frac{1}{4\pi\lambda_n}})^{2n+1}}{2n+1}
     \end{split}
 \end{equation}

 This completes the proof.

 \subsection{The Laplace transform of the interference in Case 1}
 With the definition of Laplace transform of interference  $\mathscr{L}_{\mathcal{I}^\prime_\mathcal{Z}}(s) = \mathbb{E}_{\mathcal{I}^\prime_\mathcal{Z}}[e^{-s{\mathcal{I}^\prime_\mathcal{Z}}}]$, we derive the expression of $\mathscr{L}_{\mathcal{I}^\prime_\mathcal{Z}}(s)$ as below
 \begin{equation}
 \begin{split}
     \mathscr{L}_{\mathcal{I}^\prime_\mathcal{Z}}(s) &= \mathbb{E}_{\mathcal{I}^\prime_\mathcal{Z}}[\exp(-\sum_{z \in \mathcal{Z}} s g_z A^{-1}_{r_z}L^{-\alpha}_{z}H^{\alpha\epsilon}_{z})]\\
     &\overset{(a)}{=} \mathbb{E}_{\mathcal{I}^\prime_\mathcal{Z}}[\exp(-\sum_{z \in \mathcal{Z}} s g_zL^{-\alpha-1}_{z}\frac{H^{\alpha\epsilon + 1}_{z}}{H_z - m})]\\ 
     &= \mathbb{E}_{g_z, L_z, H_z}[\prod_{z \in \mathcal{Z}}\exp(-\sum_{z \in \mathcal{Z}} s g_zL^{-\alpha-1}_{z}\frac{H^{\alpha\epsilon + 1}_{z}}{H_z - m})]\\
     &\hspace{-2em}\overset{(b)}{=} \mathbb{E}_{L_z, H_z}[\prod_{z \in \mathcal{Z}}\mathbb{E}_{g_z}[\exp(-\sum_{z \in \mathcal{Z}} s g_zL^{-\alpha-1}_{z}\frac{H^{\alpha\epsilon + 1}_{z}}{H_z - m})]]\\
     &\overset{(c)}{=}\mathbb{E}_{H_z}[\prod_{z \in \mathcal{Z}}\mathbb{E}_{L_z}[\frac{\mu}{\mu + sL^{-\alpha-1}_{z}\frac{H^{\alpha\epsilon + 1}_{z}}{H_z - m}}]]\\
     &\hspace{-4em}\overset{(d)}{=}\exp(-2\pi\lambda_n\int^{L^{max}_z}_{l_i}(1 - \mathbb{E}_{H_z}[\frac{\mu}{\mu + sx^{-\alpha-1}\frac{H^{\alpha\epsilon + 1}_{z}}{H_z - m}}])xdx)\\
 \end{split}
 \end{equation}
 where $(a)$ follows that $A_{r_z} = r\gamma_R\frac{(H_z - H^{min}_{RX})L_z}{H_z}$, $m = H^{min}_{RX}$, (b) follows from the independence of $g_z$, $(c)$ follows from the independence of $L_z$ and from the fact that $g_z \sim \exp(\mu)$, $(d)$ follows from the Probability Generating Functional (PGFL) of a HPPP. The mathematical expectation in the integral can then be written as
 \begin{equation}
 \begin{array}{l}
 {\mathbb{E}_{{H_z}}}\left[ {{\mu  \mathord{\left/
  {\vphantom {\mu  {\left( {\mu  + s{x^{ - \alpha  - 1}}\frac{{H_z^{\alpha  + 1}}}{{{H_z} - m}}} \right)}}} \right.
  \kern-\nulldelimiterspace} {\left( {\mu  + s{x^{ - \alpha  - 1}}\frac{{H_z^{\alpha  + 1}}}{{{H_z} - m}}} \right)}}} \right]\\
  = \int_{H_{RX}^{min}}^{H_{RX}^{max}} {\frac{\mu }{{\mu  + s{x^{ - \alpha  - 1}}\frac{{H_z^{\alpha  + 1}}}{{{H_z} - m}}}}}  \cdot {f_{{H_z}}}(y)dy
 \end{array}
 \end{equation}

 This completes the proof.

 \subsection{CDF of the propagation distance $L_i$ in Case 2}
 Similar to the derivations in Case 1, with (\ref{eq25}), we can compute the CDF of $L_i$ in Case 2 by
 \begin{equation}\label{eq38}
     \begin{split}
         \mathcal{F}_{L_i}\left(l_i\right) &= \frac{C_2}{(R^{max}_{RX})^2}\int^{(R^{max}_{RX})^2}_0 [e^{-(\frac{H^{min}_{RX}}{15})^6} - e^{-\frac{(l^2_i-t)^3}{15^6}}]dt \\
         &= \frac{C_2}{(R^{max}_{RX})^2}[(R^{max}_{RX})^2e^{-(\frac{H^{min}_{RX}}{15})^6}-\\
         &\hspace{10em}\int^{(R^{max}_{RX})^2}_0  e^{-\frac{(l^2_i-t)^3}{15^6}}dt]\\
     \end{split}
 \end{equation}
 where,
 \begin{equation}
     \begin{split}
         \int^{(R^{max}_{RX})^2}_0 e^{-\frac{(l^2_i-t)^3}{15^6}}dt &= \int^{(R^{max}_{RX})^2}_0 \sum^{\infty}_{n=0} \frac{1}{n!}{[-\frac{(l^2_i-t)^3}{15^6}]}^ndt \\
         &\hspace{-8em}= \sum^{\infty}_{n=0} \frac{(-1)^{n+1}}{n!15^{6n}\cdot(3n+1)}(l^2_i-t)^{3n+1}|^{(R^{max}_{RX})^2}_0\\
         &\hspace{-8em}= \sum^{\infty}_{n=0} \frac{(-1)^{n}}{n!15^{6n}\cdot(3n+1)}[(l^2_i)^{3n+1} - (l^2_i - (R^{max}_{RX})^2)^{3n+1}]
     \end{split}
 \end{equation}

 This completes the proof.

 \subsection{PDF of the propagation distance $L_i$ in Case 2}
 With (\ref{eq38}) and the theorem of differentiation under the integral sign, the PDF of $L_i$ in Case 2 is given by
 \begin{equation}
     \begin{split}
         f_{L_i}(l_i) &= \frac{d\mathcal{F}_{L_i}\left(l_i\right)}{dl_i}\\
         &= \frac{C_2}{(R^{max}_{RX})^2}\int^{(R^{max}_{RX})^2}_0 \frac{\partial}{\partial{l_i}}[e^{-(\frac{H^{min}_{RX}}{15})^6} - e^{-\frac{(l^2_i-t)^3}{15^6}}]dt\\
         &= \frac{C_2}{(R^{max}_{RX})^2}\int^{(R^{max}_{RX})^2}_0 [- e^{-\frac{(l^2_i-t)^3}{15^6}}\cdot(- \frac{3(l^2_i-t)^2}{15^6})\cdot2l_i]dt\\
         &= \frac{2C_2l_i}{(R^{max}_{RX})^2}\int^{(R^{max}_{RX})^2}_0 e^{-\frac{(l^2_i-t)^3}{15^6}}d(-\frac{(l^2_i-t)^3}{15^6})\\
         &= \frac{2C_2l_i}{(R^{max}_{RX})^2} (e^{-\frac{(l^2_i-(R^{max}_{RX})^2)^3}{15^6}} - e^{-(\frac{l_i}{15})^6})
     \end{split}
 \end{equation}

 This completes the proof.

\bibliography{referencess}

 





\end{document}